\title{Keeping it sparse: Computing Persistent Homology revisited}
\titlerunning{Keeping it sparse} 
\author{Ulrich Bauer}{Department of Mathematics, TUM School of CIT, Technical University of Munich, Germany, Germany}{ulrich.bauer@tum.de}{https://orcid.org/0000-0002-9683-0724}{DFG Collaborative Research Center SFB/TRR 109 ``Discretization in Geometry and Dynamics''}
\author{Talha Bin Masood}{Department of Science and Technology, Linköping University, Sweden}{talha.bin.masood@liu.se}
{https://orcid.org/0000-0001-5352-1086}{Swedish Research Council (VR) grant number 2023-04806}
\author{Barbara Giunti}{Department of Mathematics, Graz University of Technology, Austria}{bgiunti@tugraz.at}{https://orcid.org/0000-0002-3500-8286}{Austrian Science Fund (FWF) grant number P 29984-N35 and P 33765-N}
\author{Guillaume Houry}{{\'E}cole Polytechnique, France}{guillaume.houry@live.fr}{https://orcid.org/0000-0002-1918-0545}{}
\author{Michael Kerber}{Department of Mathematics, Graz University of Technology, Austria}{kerber@tugraz.at}{https://orcid.org/0000-0002-8030-9299}{Austrian Science Fund (FWF) grant number P 29984-N35 and P 33765-N}
\author{Abhishek Rathod}{Computer Science Department, Ben Gurion University, Israel}{arathod@post.bgu.ac.il}{https://orcid.org/0000-0003-2533-3699}{DFG Collaborative Research Center SFB/TRR 109 ``Discretization in Geometry and Dynamics''}
\authorrunning{Bauer, Bin Masood, Giunti, Houry, Kerber, Rathod} 
\keywords{Barcode algorithm, Topological Data Analysis, Matrix reduction, Sparse matrices} 
\newcommand{\define}[1]{{\bf \boldmath{#1}}}
\tikzset{
dot/.style = {circle, fill, minimum size=#1,
              inner sep=0pt, outer sep=0pt},
}
\newcommand{\betti}[1]{\beta_{#1}}
\newcommand{\altcomplex}{\mathsf{K}}
\newcommand{\complex}{\mathsf{K}}
\newcommand{\lowp}[1]{\mathrm{piv}\left(#1\right)}
\newcommand{\sizep}[1]{\# #1}
\newcommand{\phat}{PHAT\xspace}
\newcommand{\ignore}[1]{}
\newcommand{\persbetti}[1]{\beta_{#1}}
\DeclareMathOperator*{\rank}{rank}
\begin{document}

\maketitle

\begin{abstract}
In this work, we study several variants of matrix reduction via Gaussian elimination 
that try to keep the reduced matrix sparse.
The motivation comes from the growing field of topological data analysis
where matrix reduction is the major subroutine to compute barcodes, the main invariant therein.
We propose two novel variants of the standard algorithm, called swap and retrospective reductions. 
We test them on a large collection of data against other known variants to compare their efficiency, and we find that sometimes they provide a considerable speed-up.
We also present novel output-sensitive bounds for the retrospective variant which better explain the discrepancy between the cubic worst-case complexity bound and the almost linear practical behavior of matrix reduction. 
Finally, we provide several constructions on which one of the variants performs strictly better than the others.
\end{abstract}

\section{Introduction}
\subsection{Motivation}
Persistent homology is arguably the most important tool in the thriving area of topological data analysis. 
The presence of efficient algorithms for computing the \define{barcode}, its main invariant, has been an important contributing factor to its success.
In the so-called \define{standard algorithm}~\cite{top_per_and_simp}, this computation boils down to a ``restricted'' Gaussian elimination of a boundary matrix of a filtered simplicial complex: 
no swapping of rows or columns, only column operations are allowed, and the additions are only from left to right. 

Since Gaussian elimination has cubic worst-case complexity in the size of the matrix, barcodes can be computed in polynomial time. 
This makes it already efficient compared to many other topological invariants, which are usually (NP-)hard to compute or not computable at all.
In practice, however, the performance is even better: 
we observe a close-to-linear practical behavior that allows computing the barcode for matrices with billions of columns~\cite{phat_paper,roadmap}.
This happens because the matrices to be reduced are \define{sparse} (that is, the number of nonzero entries per column is a small constant) and tend to remain so during the reduction.

Gaussian elimination on sparse matrices is cheaper because column additions can be performed more efficiently with appropriate choices of sparse matrix data structures~\cite{phat_paper}.
However, the sparsity of the input matrix does not alter the worst-case cubic bounds for matrix reduction. 
Indeed, carefully crafted constructions force a matrix to become dense in the elimination process, making later column additions expensive~\cite{morozov2005worst-case}. 
On the other hand, such situations seem to be pathological and do not happen in practice (and also not on average~\cite{average_complexity}).
Therefore, the practical efficiency of reduction procedures can be linked to the preservation of the sparsity of the matrix during the elimination process.

The standard algorithm has been optimized in several ways, exploiting the special structure of boundary matrices.
This led to significant further improvements in practice (which are discussed below). 
However, the impact of sparsity on the reduction has not been adequately studied.
We pose the following questions in this paper:
are there variants of the standard algorithm, possibly performing different operations, that keep the matrix sparser than the standard one, and do these variants exhibit better practical behavior than existing methods?

\subsection{Results}
We first observe that some of the restrictions imposed on the Gaussian elimination can be relaxed: any operation preserving the ranks of certain submatrices is allowed (\cref{cor:C_legit}). 
This observation enables us, for example, to swap certain columns and to perform some right-to-left column operations. 

With this insight, we then introduce two new variants of the standard algorithm. 
The first one, called the \define{swap algorithm}, introduces one extra rule:
before adding a column $c_1$ to a column $c_2$, it swaps $c_1$ and $c_2$ if  $c_2$ is sparser (i.e., has fewer nonzero entries).
Note that checking the size of a column, as well as swapping two columns, requires constant time for most matrix representation types, and hence the overhead of this variant is negligible as long as the column data structure allows for easy retrieval of the size. 
We show in extensive experimental tests that the swap algorithm is usually competitive with the fastest known algorithms and sometimes leads to significant speedup.

The other variant is called the \define{retrospective algorithm}.
It is based on the (well-known) idea that, once the \textbf{pivot} of a column has been found, we can perform additional column operations to further eliminate entries in the column (sometimes, this is called the ``full'' or the ``exhaustive'' reduction).  
The retrospective variant pushes this idea further: it eliminates entries also via right-to-left additions of newly reduced columns.
We show significant speedup over the state-of-the-art by experimental comparison for this variant as well.
Moreover, the retrospective strategy links the non-zero entries of a column with the (persistent) homology classes at that step, providing complexity bounds that depend on the topology of the underlying data set and are therefore output-sensitive.

For practical efficiency, both variants are combined with existing improvements of the standard algorithm, namely the \define{clear} and \define{compress optimization}~\cite{clearcompress} which typically save a lot of operations in matrix reduction.

We also show that none of the proposed algorithms is strictly better than the others
in the following sense. 
Having chosen one of the three algorithms (standard, swap, and retrospective reduction), we can find a family of inputs for which the chosen algorithm performs a linear number of operations whereas the other two have quadratic complexity. 

We also investigate other variants to ensure sparsity. 
For instance, during the exhaustive reduction, we generate several intermediate columns which are all valid representations for the rest of the algorithm, and we pick the sparsest column among these.
Remarkably, whilst this strategy appears to improve on both the standard and the exhaustive variants, in practice it performs worse than both of them. 
This shows that ensuring sparsity is not the only reason for the good practical performance of computing barcodes.

\subsection{Related work}
The \phat library~\cite{phat_paper} contains a collection of algorithms and matrix representation types to test various approaches for Gaussian elimination on boundary matrices in a unified framework.
Our work contributes several new algorithms to the \phat library.
We confirm the earlier observation that the quality of different elimination strategies significantly depends on the chosen data structure.

There are numerous other libraries to compute the barcode; we just refer to some comparative studies~\cite{minimal_userguide,roadmap,Cufar_Virk}. 
We point out that, in there, all tested libraries include further functionalities, in particular, generating a boundary matrix out of a point cloud, whereas \phat, as well as the present paper, focuses entirely on the Gaussian elimination step. 
\phat is among the most efficient libraries for this substep,
as demonstrated, for instance, in~\cite{phat_paper} and \cite{roadmap}.

Even more efficient algorithms have been developed for special cases of simplicial complexes, for instance, Vietoris--Rips complexes~\cite{bauer2021ripser, henselman2016matroid} and cubical complexes~\cite{gvt-dms,kaji2020cubical,wagner-socg-2023, Wagner2012}. 
The improvements in these specialized implementations are not based on optimized reduction strategies such as the ones considered in the present paper. However, not all of the reduction strategies considered here are compatible with these implementations, which is why we focus on the general-purpose framework PHAT for our comparisons.
There have also been several approaches that have focused specifically on parallel and distributed computation~\cite{clearcompress,dipha,lewis2015parallel,zhang2020gpu} for performance gains.
Another approach, taken for example in \cite{collapse1,collapse2}, is to simplify the filtration without modifying the homology. 
This approach effectively downsizes the matrix but does not sparsify it.

The best worst-case complexity for computing
the barcode is $O(n^\omega)$, where $\omega$ is the matrix multiplication constant~\cite{milos}. 
However, this approach is not based on Gaussian elimination and is not competitive in practice. 
There is no sub-cubic complexity bound known for any barcode algorithm based on Gaussian elimination.
The output-sensitive bounds that we derive still lead to cubic worst-case bounds, but can be tighter depending on the topological properties of the input. 
These bounds refine the bound by Chen and Kerber~\cite{twist}.

Computing homology with respect to $\mathbb{Z}$ coefficients requires computing the Smith normal form using integer Gaussian elimination~\cite[Chapter 1]{munkres}. 
In that context, fill-in of matrices is a minor concern, whereas a significant challenge comes from the fact that the size of intermediate entries during matrix reduction can grow exponentially~\cite{worstsmith}.

\section{Preliminaries}\label{sec_preliminaries}

\subsection{Matrix reduction} 
The algorithms presented in \Cref{variants} can be stated and implemented for other field coefficients with minor changes. 
However, for simplicity, throughout the paper, we work with $\mathbb{Z}_2$ coefficients.
Given a matrix $M$ with entries in $\mathbb{Z}_2$, $M^i$ denotes its $i$-th row, $M_j$ its $j$-th column, $M_j^i$ its element in position $i,j$, and $\sizep{M_j}$ the number of nonzero entries in $M_j$.
$N$ denotes the number of columns of $M$.

The \define{pivot} of a column, denoted by $\lowp{M_j}$, is the (row) index of the lowest nonzero element in $M_j$. 
A \define{left-to-right column operation} is the addition of $M_j$ to $M_i$ with $j<i$. 
A matrix is \define{reduced} if all its nonzero columns have pairwise distinct pivots.
The process of obtaining a reduced matrix using column additions is called \define{matrix reduction}. 
A \define{pivot pair} is a pair of indices $(i,j)$ such that $i=\lowp{M_j}$ in the reduced matrix.
\cref{algorithm_std_red} reduces the columns from left to right in order and is usually referred to as the \define{standard algorithm} for matrix reduction. 

\begin{algorithm2e}
\caption{Standard reduction}
\label{algorithm_std_red}
\KwIn{Boundary matrix $D$}
\KwOut{Reduced matrix $R$}
$R = D$ \;
\For{$j = 1,\dots,N$}{ 
	\While{$\lowp{R_{j'}}=\lowp{R_j}\neq 0$ for $j' < j$}{
		add $R_{j'}$ to $R_j$}
	} 
\end{algorithm2e}

\subsection{Filtered simplicial complex and boundary matrix} 
We apply matrix reduction on a class of matrices that arise from computational topology.

A \define{simplicial complex} $\altcomplex$ over a finite set $V$ is a collection of subsets (called \define{simplices}) of $V$ closed under inclusion, i.e. with the property that if $\sigma\in \altcomplex$ and $\tau\subset\sigma$, also $\tau\in \altcomplex$. 
A simplex with $(k+1)$-elements is called \define{$k$-simplex} and its dimension, $\dim(\sigma)$, is $k$. 
The dimension of $\complex$ is the maximal dimension of its simplices.
For $k=0,1,2$ the terms \define{vertices}, \define{edges}, and \define{triangles} are also used, respectively.
For a $k$-simplex $\sigma\in \altcomplex$, we call a $(k-1)$-simplex $\tau$ with $\tau\subset\sigma$ a \define{facet} of $\sigma$.
The set of facets of $\sigma$ is called its \define{boundary}. 

A \define{simplexwise filtered simplicial complex} is a sequence of nested simplicial complexes $\emptyset=\altcomplex_{0}\subseteq \cdots\subseteq \altcomplex_N=\altcomplex$ such that $\altcomplex_i\setminus \altcomplex_{i-1}= \{\sigma_{i}\}$ for all $i=1,\dots, N$. 
We denote the dimension of a simplex $\sigma_i$ by $d_i$.
The \define{boundary matrix} $D$ of a filtered simplicial complex is the $(N\times N)$-matrix such that $D_j^i=1$ if $\sigma_i$ is a facet of $\sigma_j$, and $0$ otherwise. 
In other words, the $j$-th column of $D$ encodes the boundary of the $j$-th simplex of the filtration.
Note that the boundary of a $k$-simplex consists of exactly $k+1$ facets, so under the reasonable assumption that the maximal dimension of a simplicial complex is a small constant, $D$ has only a constant number of nonzero entries in each column. 

\subsection{Persistence pairs}
Matrix reduction on boundary matrices reveals topological properties of the underlying filtered simplicial complex. 
We use standard notations for the necessary concepts that originate from (persistent) homology theory.
We also informally describe their topological meaning, although no deeper understanding of these concepts is required for the results of the paper.

Fixing a filtration boundary matrix $D$, matrix reduction yields a collection of pivot pairs $(i,j)$. 
The corresponding pair of simplices $(\sigma_i,\sigma_j)$ is called a \define{persistence pair}. 
For a persistence pair, $\dim(\sigma_j)=\dim(\sigma_i)+1$.
Informally, the meaning of a persistence pair is that when $\sigma_i$ is added to the filtered simplicial complex (at step $i$), it gives rise to a new ``hole'' in the complex (more precisely, a homology class). 
This hole disappears when $\sigma_j$ enters the filtered simplicial complex (e.g., $\sigma_j$ fills up that hole).
For formal definitions of these concepts, see~\cite[Sec. VII]{edel_harer2010computational}.

Pivot pairs of boundary matrices have special properties that are not true for other types of matrices: 
first of all, every pivot pair $(i,j)$ satisfies $i<j$, because a filtration boundary matrix is necessarily upper-triangular, and this property is preserved by matrix reduction.
Moreover, every index $j$ appears in at most one pivot pair:
this is based on the fact that inserting a $k$-simplex into a simplicial complex either creates a homology class in dimension $k$ or kills a homology class in dimension $k-1$~\cite[Pag. 154, see also Sec. V.4]{edel_harer2010computational}.
This allows us to classify simplices of the filtered simplicial complex into three types:
we call a simplex \define{positive} if it appears as the first entry in a persistence pair, \define{negative} if it appears as the second entry in a persistence pair, and \define{essential} if it does not appear in any persistence pair. In topological terms, essential simplices create a hole that is not filled up during the course of the filtration.

In what follows, we blur the difference between pivot pairs (of indices) and persistence pairs (of simplices) and identify $\sigma_i$, its index $i$ in the filtration, and the $i$-th column/row of the filtration boundary matrix. 
Hence, whenever convenient, we also talk about positive/negative indices and rows/columns.

\subsection{Clear and compress}
The special structure of boundary matrices allows for simple but effective speedups of matrix reduction. 
We describe two speedup heuristics that are relevant in this work, discussed extensively in~\cite{clearcompress}. 
Both are based on the observation that every index appears in at most one pivot pair. 

For the first heuristic, let us fix a negative index
$i$ and a column $D_j$ with $D_j^i\neq 0$.
It is then easy to see that $i$ cannot become the pivot of $D_j$ during the reduction process (because then, either $D_j$ itself or another column must end up with $i$ as the pivot, contradicting the assumption that $i$ is negative). 
Hence, we can simply remove the index $i$ from $D_j$ without changing the pivot pairs.
We call the process of removing all negative row indices from a column \define{compressing a column}.
For the second heuristic, let us fix a positive index $i$ and consider its column $D_i$. 
It can be readily observed that in the reduced matrix, $D_i$ cannot have a pivot because that would imply that $i$ is negative.
Therefore, $D_i$ can just be set to zero without changing the pivot pairs. 
We call this step \define{clearing a column}.

Note that to make use of clearing, the simplices of the simplicial complex have to be processed in decreasing dimensions; 
we refer to this variant of matrix reduction with clearing as \define{twist reduction}; the pseudocode can be obtained from~\cref{algorithm_swap_red} by removing \cref{swap_clear,swap_ifpivot}.
On the other hand, using compression requires proceeding in increasing dimensions, so clear and compress mutually exclude each other,  except for more sophisticated approaches~\cite{clearcompress}.
As shown in~\cite{phat_paper}, the twist reduction has a very satisfying practical performance and is the default choice in the \phat library.

\subsection{Column representations}
A crucial design choice when implementing matrix reduction is how to store the columns of the matrix. 
Since boundary matrices are initially sparse, and usually do not fill up too much in the reduction process, a dense vector over $\mathbb{Z}_2$ is a bad choice for its memory consumption. 
A data structure whose size is proportional to the number of nonzero entries in a column is preferred. 
A natural choice is to simply store the indices of nonzero entries in a sorted dynamic array (vector). 
Adding two such columns requires a merge of the two arrays canceling double-occurrences and is therefore proportional to the combined size of both columns. 
Storing the indices in heap order instead, we can realize the addition of $M_i$ to $M_j$ by inserting every entry of $M_i$ into the heap of $M_j$ which is possible in logarithmic time per entry in $M_i$. 
This approach does not eliminate double-occurrences of entries, but their removal can be delayed to a later point when either a pivot is queried or sufficiently many operations have been performed on a column such that a linear scan of the heap is affordable. Using such a ``lazy-heap'', the amortized cost of adding $M_j$ to $M_i$ is proportional to the size of $M_i$ plus logarithmic overhead.
We remark that sorted linked lists and balanced binary search trees can be used instead of vectors and heaps, respectively, to obtain similar performance guarantees in theory, but these data structures suffer from the lack of locality in storing the data which results in many cache misses and makes them significantly slower in most application scenarios.

Furthermore, while storing the non-zero indices of a column as a vector (or heap) is an efficient choice, it can be worthwhile to transform a column to a different data structure when reducing it. 
For instance, while storing every column as $0$-$1$-vector is prohibitive because of the memory consumption, it can be beneficial to ``expand'' the dense array of non-zero indices to a long array of $0$, and $1$, perform column additions on this column, and transform it back into a dense column once the column is reduced. 
An alternative is to use a lazy heap solely for the column to be reduced, and sorted arrays for all other columns.

The software library \phat~\cite{phat_paper} implements the all aforementioned data representations and several additional ones~-- we refer to the paper
for more extensive explanations of the data structures. 
We emphasize that the performance of matrix reduction depends not only on the reduction algorithm but on combining that algorithm with a suitable column representation; see~\cite[Tables~1 and 4]{phat_paper}.

\subsection{Dualization}
Given a simplex $\sigma$, the collection of all the simplices that have $\sigma$ as a facet is the \define{coboundary} of $\sigma$. 
As we did for the boundary, we can define the filtration \textbf{coboundary matrix}. 
The crucial observation is that the two matrices are (almost) anti-transposes of each other, and their pivot pairs are in bijection. 
We do not enter the details here, referring to~\cite{dualities} for the precise statements, but this observation has important consequences in practice.
As already observed~\cite{phat_paper, roadmap}, it is much faster to reduce the coboundary matrix than the boundary matrix for some inputs, in particular, for Vietoris--Rips filtrations.
An explanation of this phenomenon is given in~\cite{bauer2021ripser}.
Anti-transposing the matrix is called the \define{dualization process}, and it adds another degree of freedom when comparing the efficiency of the reductions. 

\section{Sparsification variants}\label{variants}
The Pairing Lemma~\cite[Pag. 154]{edel_harer2010computational} shows that the presence of a pivot pair $(i,j)$ is related to an inclusion-exclusion formula of ranks of $D[\geq \ast, \leq \bullet]$, submatrices of $D$ given by the last $\ast$ rows of the first $\bullet$ columns. 
It is usually used to prove the correctness of \cref{algorithm_std_red}, but it is much more general, as it implies:

\begin{corollary}\label{cor:C_legit}
Any matrix reduction algorithm that preserves the ranks of the submatrices $D[\geq i, \leq j]$, for all $i,j\in \{1,\dots, N\}$ is a valid barcode algorithm.
\end{corollary}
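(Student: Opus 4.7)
The plan is to derive the statement as a direct consequence of the Pairing Lemma just quoted, with no new topological input. The Pairing Lemma, stated sharply, says that for \emph{any} reduced matrix $R$, the index pair $(i,j)$ is a pivot pair of $R$ if and only if
\[
\rank(R[\geq i, \leq j]) - \rank(R[\geq i{+}1, \leq j]) - \rank(R[\geq i, \leq j{-}1]) + \rank(R[\geq i{+}1, \leq j{-}1]) = 1.
\]
The right-hand side involves only the ranks of the submatrices $R[\geq a, \leq b]$ and makes no reference to how $R$ was obtained. The only assumption on $R$ is that it is reduced, i.e.\ has pairwise distinct pivots on its nonzero columns, so the statement applies to every reduced matrix and not just to the output of \cref{algorithm_std_red}.

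I would next invoke the standard observation that left-to-right column operations correspond to right multiplication by an invertible upper-triangular matrix, and such multiplications do not change the column span of any block $D[\geq a, \leq b]$. Hence $\rank(D[\geq a, \leq b])$ is preserved by \cref{algorithm_std_red}, so the persistence pairs -- defined as the pivot pairs of its reduced output -- can equivalently be computed by the inclusion-exclusion rank formula applied to $D$ itself. Thus the persistence pairs of the filtered simplicial complex are intrinsically encoded by the family $\{\rank(D[\geq a, \leq b])\}_{a,b}$.

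Now suppose an alternative procedure transforms $D$ into a reduced matrix $R'$ with $\rank(R'[\geq a, \leq b]) = \rank(D[\geq a, \leq b])$ for all $a,b$. Applying the Pairing Lemma to $R'$ and substituting the preserved ranks shows that the pivot pairs of $R'$ coincide with those determined by the ranks of $D$, which by the previous step are the persistence pairs of the filtration. Hence $R'$ yields the correct barcode. The only subtle point is isolating the rank-intrinsic formulation of the Pairing Lemma that does not presuppose left-to-right operations; once this is done the corollary follows without further calculation, and its utility lies in legitimising non-standard moves such as column swaps and right-to-left additions.
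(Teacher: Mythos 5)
Your proposal is correct and follows essentially the same route as the paper: the paper's proof likewise invokes the Pairing Lemma's inclusion–exclusion rank characterization of pivot pairs and concludes immediately from the hypothesis that the ranks of the submatrices $D[\geq i,\leq j]$ are preserved. Your version merely makes explicit the two steps the paper leaves implicit — that the rank formula characterizes pivot pairs of \emph{any} reduced matrix, and that the standard left-to-right reduction already preserves these ranks so the formula on $D$ encodes the true persistence pairs — which is a faithful elaboration rather than a different argument.
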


In other words, any reduction whose operations do not alter the ranks of lower-left submatrices will result in the same barcode decomposition.

\begin{proof}
Consider a reduction algorithm satisfying the hypothesis, and assume it obtains the pivot pair $(i,j)$.
By the Pairing Lemma, $(i,j)$ is a pivot pair if and only if 
$\rank\left(D[\geq i,\leq j]\right)-\rank\left(D[\geq i+1,\leq j]\right)+\rank\left(D[\geq i+1,\leq j-1]\right)-\rank\left(D[\geq i,\leq j-1]\right)=1$. 
Since all these ranks are preserved by the reduction, the claim follows.
\end{proof}

Notably, this interpretation of matrix reduction generalizes the common assumption that the reduced matrix $R$ is obtained from the original boundary matrix $D$ by left-to-right column additions, or equivalently, by multiplication with an invertible rank upper-triangular matrix.
While this restriction ensures that the reduction data determines a decomposition of the filtered chain complex (see, e.g., \cite{bauer2021ripser,edelsbrunner2017persistent}), the above observation shows that a weaker condition is sufficient if one is only interested in the barcode itself and not in the representative cycles or cocycles.
This insight opens the possibility of many new variants of the barcode algorithm that go beyond the use of left-to-right column additions. 
We now present some that try to keep the matrix sparse during the reduction.

\subsection{Swap reduction}
Our first major variant is based on the following simple idea: 
assume that the standard algorithm adds column $R_i$ to $R_j$ (hence $i<j$ and $R_i$ and $R_j$ have the same pivot). 
Before doing so, we can check first whether $R_j$ has fewer entries than $R_i$; in this case, we swap columns $R_i$ and $R_j$ first and perform the addition afterward (which still results in replacing $R_j$ with $R_i+R_j$). 
This swap is not only profitable in the column additions performed in this step, but also in every later column addition that involves column $R_i$. 

We call this variant the \define{swap reduction}. 
This variant appeared in Schreiber's PhD Thesis~\cite[p. 77]{SchreiberThesis} as a tool to control the size of the boundary matrix in a theorem on average complexity of matrix reduction;~see also~\cite{ks-casc}. 
The first indications of its practical performance appeared in~\cite{schmid_bachelor}.
We also point out that the swap reduction can easily be combined with the clearing optimization; 
see \cref{algorithm_swap_red} for the pseudocode for this variant.

For correctness, assume that columns $1,\ldots,i-1$ are already reduced and that column $i$ has the same pivot of a previous column, call it $i'$. 
Then the ranks of all lower-left submatrices up to $i$ are unchanged if we swap $i$ and $i'$. 
The correctness follows from \cref{cor:C_legit}.

\begin{algorithm2e}
	\caption{Swap reduction}
	\label{algorithm_swap_red}
	\KwIn{Boundary matrix $D$ of a simplicial complex of dimension $d$}
	\KwOut{Reduced matrix $R$}
	$R = D$ \;
        \For{$\delta=d,d-1,\ldots,0$}{
	  \For{$j = 1,\dots,N$}{ 
            \If{$R_j$ has simplex-dimension $\delta$} {
		\While{$\lowp{R_{j'}}=\lowp{R_j}\neq 0$ for $j' < j$}{
		\If{$\sizep{R_j}<\sizep{R_{j'}}$\label{swap_ifpivot}}{
			swap $R_j$ and $R_{j'}$\label{swap_clear}}
				add $R_{j'}$ to $R_j$}
		\If{$R_j\neq0$}{
		Set $R_i$ to $0$ for $i=\lowp{R_j}$}
		}
              }
            }
\end{algorithm2e}

\subsection{Exhaustive reduction}

We review the \define{exhaustive reduction}, discussed in~\cite{Edel-Olsb}, even if the idea was already present in~\cite{linkingnumbers,CZ05}. 
The idea is that after the pivot of the reduced matrix has been identified, further (left-to-right) column additions
are performed to eliminate nonzero entries with indices smaller than the pivot. 
Note that this algorithm produces the lexicographically smallest possible representative for the column given by left-to-right column additions. 
We omit the pseudocode for brevity (see \cite{Edel-Olsb}). 
The exhaustive reduction is combined with the compress-optimization (i.e. removing negative entries from a column before processing it)~\cite{CZ05}. 
In this way, the exhaustive reduction guarantees that the number of nonzero entries in $R_\ell$ after reduction is at most the number of homological classes in $\altcomplex_\ell$. 

\subsection{Retrospective reduction}
Our second major variant is the \define{retrospective reduction}, based on the idea of using (previous and subsequent) pivots to eliminate entries in a column when it needs to be added. 
An entry in $R_k^i$ is \define{unpaired at $\ell$} if there does not exist a pivot pair $(i,j)$ with $j \leq \ell$, and \define{paired at $\ell$} otherwise. 
Whenever we add a column $R_\ell$ to $R_k$, we first update $R_\ell$ by removing through appropriate column additions all entries that have been paired meanwhile. 
Note that, if the addition of $R_m$ to $R_\ell$ is needed for this purpose, then $R_m$ has to be updated first, so the step is recursive. 
The recursion stops because the pivot of a column is strictly decreasing in every recursive call.
Since these right-to-left column additions involve only entries whose index is smaller than the pivot in the respective column, none of them changes the rank of any lower-left submatrix. 
Therefore, this reduction is correct by \cref{cor:C_legit}.

The retrospective algorithm has the property that whenever $R_j$ gets added to another column during iteration $k$, its size is at most the number of homological classes persisting from $j$ to $k$ (see \cref{lem:backregular}). 
That is, it tries to sparsify columns ``that matter'', i.e. the columns that get added to other columns.

\begin{algorithm2e}
\caption{Retrospective reduction}
\label{algorithm_lr_red}
\KwIn{Boundary matrix $D$}
\KwOut{Reduced matrix $R$}
\SetKwFunction{main}{Main}
\SetKwFunction{red}{Reduce}
\SetKwProg{myproc}{Procedure}{}{}
\myproc{\main{$D$}}{
	$R = D$, $P=\emptyset$ \;
\For{$j= 1,\dots, N$ }{
Remove the negative entries from $R_j$ \;
\red{$j$}\label{lst:line:invoke1}
}
}
\myproc{\red{$j$}}{
\While{$\exists$ paired entries in $R_j$}{
    Let $\ell$ be the largest index for which $R_{j}^{\ell}$ is paired \;
    Add $\red{$P[\ell]$} $ to $R_j$
} 
\If{ $R_j \neq 0$}
{$P[ \textsc{Pivot}(R_j)] \gets j$}
\Return $R_j$
}
\end{algorithm2e}

\subsection{Representative cycles}
A \define{representative cycle} is a set of simplices that loop around a hole in the complex, and its computation is often of interest, in addition to the one of the associated persistence pair~\cite{bauer2021ripser,Dey_Wang_2022,hang2021umatch,minimal_userguide,obayashi2018volume}.
In the standard algorithm, at the end of the reduction, the nonzero column providing the persistence pair
$(i,j)$ encodes such a representative cycle directly; this is not true in the swap and in the retrospective reductions.
However, it holds that during the execution of either algorithm, once
the persistence pair $(i,j)$ is identified (i.e., before any swapping of column $j$ or any right-to-left column additions on column $j$), the column represents a valid representative for the homology class.
So, while the representatives are not encoded in the final matrix, they can be stored with little extra effort.

\subsection{Further variants}
There are numerous alternatives to obtain reduced columns of (potentially) smaller size. We mention two more variants:
recall that the exhaustive algorithm performs a sequence of further column additions after the pivot has been determined.
In this process, it computes a sequence of columns $c_1,\ldots,c_r$, all with the same pivot and therefore being valid choices for the reduced matrix. 
In the \define{mixed strategy}, we simply remember which column has the smallest size and use its reduced column. 
Since this variant ``locally'' improves the size of a reduced column compared to both the standard and exhaustive variant, one could hope that the mixed strategy improves on both of them.

A (perhaps obvious) further variant is to compute the column with the smallest size among all possible alternatives. 
This problem can be re-phrased as follows.
Given a vector $W$ and $n$ vectors $U_1,\dots,U_n$ in $\mathbb{Z}_2^m$, find $a_1,\dots, a_n$ in $\mathbb{Z}_2$ such that $W + a_1 U_1 + \dots + a_n U_n$ has the minimum number of nonzero  coefficients. 
This problem is called the Nearest Codeword Problem (\textsc{NCW}), which is known to be NP-hard and NP-hard to approximate within any constant factor~\cite{arora}. 
For completeness, we show a simple reduction from \textsc{MaxCut}. 

\begin{proposition}\label{Prop_np_hard}
\textsc{NCW} is NP-hard.
\end{proposition}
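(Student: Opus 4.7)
The plan is to give a polynomial-time reduction from the decision version of \textsc{MaxCut} (which is NP-hard) to the decision version of \textsc{Sparse-Z2}: given $W, U_1, \dots, U_n$ and an integer $t$, decide whether there exist $a_i \in \mathbb{Z}_2$ such that $W + \sum_i a_i U_i$ has Hamming weight at most $t$.

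Given an instance of \textsc{MaxCut}, namely a graph $G = (V,E)$ with $V = \{v_1, \dots, v_n\}$ and $E = \{e_1, \dots, e_m\}$, and a threshold $k$, the idea is to encode a candidate cut by the bit-vector $(a_1, \dots, a_n) \in \mathbb{Z}_2^n$, where $a_i = 1$ indicates $v_i \in S$. First I would build vectors in $\mathbb{Z}_2^m$ (one coordinate per edge) as follows: let $W \in \mathbb{Z}_2^m$ be the all-ones vector, and for each vertex $v_i$ let $U_i \in \mathbb{Z}_2^m$ be the indicator vector of the edges incident to $v_i$, i.e.\ $U_i^{\,k} = 1$ if and only if $v_i$ is an endpoint of $e_k$. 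Then the $k$-th coordinate of $W + \sum_i a_i U_i$ equals $1 + a_{i_1} + a_{i_2}$, where $e_k = \{v_{i_1}, v_{i_2}\}$, and this is $0$ exactly when $a_{i_1} \neq a_{i_2}$, i.e.\ exactly when the edge $e_k$ is cut by the partition induced by $(a_1, \dots, a_n)$.

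Consequently, the Hamming weight of $W + \sum_i a_i U_i$ equals $m - c(a_1, \dots, a_n)$, where $c$ denotes the number of cut edges. A cut of size at least $k$ therefore exists in $G$ if and only if the \textsc{Sparse-Z2} instance admits a solution of weight at most $m - k$. The construction takes time polynomial in the size of $G$, so this establishes NP-hardness.

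The main (minor) obstacle is the routine one of making sure that every coordinate of $W + \sum_i a_i U_i$ really does correspond to one edge and contributes either $0$ or $1$ to the weight with the correct semantics; the construction above with $W$ the all-ones vector makes this transparent. No further ingredients are needed, which is why the statement advertises the reduction as ``simple''.
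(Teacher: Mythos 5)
Your reduction is correct and is essentially the same as the paper's: both take $W$ to be the all-ones vector and the $U_i$ to be the columns of the vertex-edge incidence matrix of $G$, so that zero coordinates of $W+\sum_i a_i U_i$ correspond exactly to cut edges, reducing \textsc{MaxCut} to \textsc{Sparse-Z2}. Your explicit handling of the decision-version thresholds ($k$ versus $m-k$) is a minor extra care step, but the underlying construction and argument coincide with the paper's proof.
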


\begin{proof}
Let $G = (V,E)$ be a graph, with set of vertices $V=(v_1,\dots,v_n)$ and edges $E=(e_1,\dots,e_m)$, and $M(G) \in \mathbb{Z}_2^{m \times n}$ its vertex-edge incidence matrix. 
Set $W=[1,\dots,1]$, and $U_1,\dots, U_n$ as the columns of $M(G)$.
Given $a_1,\dots,a_n\in \mathbb{Z}_2$, let $W'=W + a_1 U_1 + \dots + a_n U_n$. 
For $e_j=(v_i,v_k)$, we have that $W'[j]=0$ if and only if $a_i=0$ and $a_k=1$, or $a_i=1$ and $a_k=0$. 
Therefore, setting $A=\{i \colon a_i = 0 \}$ and $B=\{i \colon a_i = 1 \}$, we have
\[
\vert \{j \colon W'[j]=0\} \vert = \vert \{e=(v_i,v_k) \in E \colon (i\in A \wedge k \in B) \vee (k\in A \wedge i \in B) \} \vert 
\]
Thus, maximizing the number of zeros of $W'$ is equivalent to finding a maximum cut of $G$, so we can reduce \textsc{MaxCut} to \textsc{NCW}, establishing the NP-hardness of \textsc{NCW}~\cite{karp}.
\end{proof}

\section{Experiments}\label{experiments}
We have implemented our new algorithmic variants (\define{swap}, \define{retrospective}, and \define{mix}) as an extension of the publicly available \phat library~\cite{phat_paper}. 
We also implemented the exhaustive reduction for comparison.
All our algorithms are implemented such that they can be combined with any of the data structures provided by \phat (which required minor extensions of the interface), and 
are included in version 1.7 of the library.
Moreover, we included a branch called \texttt{keeping\_it\_sparse\_experiments} that contains
further modifications and test scripts required to reproduce
the experiments of this section. 
Finally, the datasets
that were used in our experiments together with the required
scripts to produce them are available in a public repository~\cite{repo}.

We address three questions in our experimental evaluation:
\begin{itemize}
\itemsep0em
\item To what extent do our novel approaches really sparsify the reduced boundary matrix, and does this sparsification lead to a reduction in the number of matrix operations performed?
\item What are the most appropriate data structures to represent columns for our novel approaches?
\item How do the best combinations perform in comparison with the default options of \phat?
\end{itemize}

For our tests, we ran our experiments on a workstation with an Intel Xeon E5-1650v3 CPU and 64 GB of RAM, running Ubuntu 18.04.6 LTS, with gcc version 9.4.0 and optimization flags -{}-O3 -{}-DNDEBUG. 
The implementation is not parallelized.

\subsection{Datasets}
We cover different types of filtered simplicial complexes
to investigate the performance in a broader context. 
In particular, we used \define{Vietoris--Rips filtrations} of high-dimensional point clouds,
taken from the benchmark set in~\cite{roadmap} (in all cases, we restricted to the $2$-skeleton without imposing a limit on the edge length), we generated \define{alpha shape filtrations} of random points clouds on a cube, a swissroll and a torus (generated with \cite{tadaset}) and \define{lower star filtrations} generated from publicly available three-dimensional scalar fields~\cite{images_database}. 
The latter are not simplicial but cubical complexes~-- all concepts in this paper carry over to this case without difficulty.

We also include the \define{shuffled filtration}: it is obtained by adding $n$ vertices, then all $\binom{n}{2}$ edges in random order, and finally all $\binom{n}{3}$ triangles in random order.
Such filtrations tend to perform significantly more
column operations than the standard examples.
Therefore, we consider them as a ``stress-test'' for 
challenging reduction tasks that have recently shown up
in the context of image persistence \cite{bauer-schmahl-image} and two-parameter persistence computation using cohomology \cite{bauer-lenzen-lesnick-cohomology}.

\begin{table}[h!]
\begin{adjustbox}{width=\columnwidth,center}
\begin{tabular}[t]{lrrrrrrrrr}
\toprule
\multicolumn{1}{c}{ } & \multicolumn{3}{c}{Cube} & \multicolumn{3}{c}{Swissroll} & \multicolumn{3}{c}{Torus} \\
\cmidrule(l{3pt}r{3pt}){2-4} \cmidrule(l{3pt}r{3pt}){5-7} \cmidrule(l{3pt}r{3pt}){8-10}
Algorithm & fill-in & Col.ops & Bitflips& fill-in & Col.ops & Bitflips& fill-in & Col.ops & Bitflips\\
\midrule 
twist  & 0.56M & 55,613 & 0.23M & 0.62M & 40,257 & 0.15M & 1.10M & 0.10M & 0.50M
\\
twist$^*$  & 0.95M & 45,019 & 0.66M & 1.16M & 41,285 & 0.74M & 1.95M & 39,186 & 1.05M
\\ 
swap  & 0.56M & 55,560 & \bf{0.22M} & 0.62M & \bf{40,243} & \bf{0.14M} & 1.08M & 0.10M & \bf{0.47M}
\\ 
swap$^*$ & 0.85M & \bf{44,547} & 0.54M & 1.02M & 41,293 & 0.61M & 1.74M & \bf{39,184} & 0.84M
\\ 
retro  & 0.17M & 0.52M & 0.76M & 0.20M & 0.59M & 0.87M & \bf{0.31M} & 1.03M & 1.39M
\\ 
retro$^*$  & \bf{0.16M} & 0.38M & 0.45M & \bf{0.18}M & 0.42M & 0.47M & 0.36M & 0.72M & 0.98M
\\ 
exhaust  & 0.19M & 0.81M & 1.42M & 0.21M & 0.99M & 1.69M & 0.32M & 1.67M & 2.68M
\\ 
exhaust$^*$  & 0.20M & 0.42M & 0.60M & 0.21M & 0.45M & 0.56M & 0.38M & 0.82M & 1.22M
\\ 
mix  & 0.18M & 1.08M & 1.97M & 0.21M & 1.35M & 2.41M & 0.32M & 2.27M & 3.89M
\\
mix$^*$  & 0.20M & 0.56M & 0.88M & 0.21M & 0.57M & 0.79M & 0.38M & 3.07M & 5.72M
\\
\bottomrule
\end{tabular}
\end{adjustbox}
\begin{adjustbox}{width=\columnwidth,center}
\begin{tabular}[t]{lrrrrrrrrr}
\toprule
\multicolumn{1}{c}{ } & \multicolumn{3}{c}{Random 50} & \multicolumn{3}{c}{Random 100} & \multicolumn{3}{c}{Senate} \\
\cmidrule(l{3pt}r{3pt}){2-4} \cmidrule(l{3pt}r{3pt}){5-7} \cmidrule(l{3pt}r{3pt}){8-10}
Algorithm & fill-in & Col.ops & Bitflips& fill-in & Col.ops & Bitflips& fill-in & Col.ops & Bitflips\\
\midrule 
twist  & 3,728 & 0.28M & 0.91M & 14,975 & 5.35M & 19.02M & 15,683 & 1.66M & 5.05M
\\
twist$^*$  & 62,780 & \bf{101} & \bf{6,514} & 0.51M & \bf{222} & 38,342 & 0.54M & \bf{124} & 24,729
\\ 
swap   & 3,691 & 0.27M & 0.85M & 14,887 & 5.19M & 17.29M & 15,673 & 1.66M & 5.00M
\\ 
swap$^*$   & 62,420 & \bf{101} & \bf{6,154} & 0.51M & 224 & \bf{33,830} & 0.54M & \bf{124} & \bf{16,655}
\\ 
retro   & \bf{1,274} & 57,677 & 60,124 & \bf{5,049} & 0.48M & 0.49M & \bf{5,355} & 0.53M & 0.54M
\\ 
retro$^*$  & 0.29M & 2,466 & 0.62M & 6.41M & 9,944 & 14.32M & 1.66M & 10,421 & 3.38M 
\\ 
exhaust  & 1,326 & 61,555 & 67,968 & 5,172 & 0.51M & 0.55M & 5,377 & 0.54M & 0.56M
\\ 
exhaust$^*$  & 0.28M & 2,503 & 0.63M & 5.36M & 10,025 & 13.67M & 1.66M & 10,430 & 3.40M
\\ 
mix   & 1,326 & 64,008 & 72,874 & 5,172 & 0.52M & 0.57M & 5,377 & 0.55M & 0.59M
\\
mix$^*$   & 58,693 & 22,753 & 1.12M & 0.49M & 0.22M & 21.84M & 0.52M & 0.19M & 18.83M
\\
\bottomrule
\end{tabular}
\end{adjustbox}
\begin{adjustbox}{width=\columnwidth,center}
\begin{tabular}[t]{lrrrrrrrrr}
\toprule
\multicolumn{1}{c}{ } & \multicolumn{3}{c}{Nucleon} & \multicolumn{3}{c}{Fuel} & \multicolumn{3}{c}{Tooth} \\
\cmidrule(l{3pt}r{3pt}){2-4} \cmidrule(l{3pt}r{3pt}){5-7} \cmidrule(l{3pt}r{3pt}){8-10}
Algorithm & fill-in & Col.ops & Bitflips& fill-in & Col.ops & Bitflips& fill-in & Col.ops & Bitflips\\
\midrule 
twist   & 1.52M & \bf{0.26M} & \bf{1.22M} & 30.15M & 13.29M & 53.28M & 34.70M & 4.90M & 29.43M
\\
twist$^*$   & 1.55M & 0.33M & 2.19M & 31.56M & 13.86M & 89.69M & 33.18M & 4.92M & 30.07M
\\ 
swap   & 1.51M & 0.27M & 1.23M & 30.14M & 13.29M & 53.28M & 33.61M & \bf{4.83M} & \bf{26.79M}
\\ 
swap$^*$   & 1.45M & 0.31M & 1.71M & 24.94M & \bf{13.21M} & 78.44M & 31.59M & 4.86M & 27.15M
\\ 
retro    & 0.41M & 1.04M & 2.16M & 1.51M & 16.73M & 49.55M &  8.13M & 21.51M & 34.70M
\\ 
retro$^*$   & \bf{0.30M} & 1.00M & 1.41M & \bf{1.03M} & 16.09M & \bf{29.64M} & \bf{7.35M} & 21.94M & 31.73M
\\ 
exhaust   & 0.59M & 1.28M & 3.15M & 14.89M & 30.53M & 106.36M & 11.07M & 27.03M & 52.18M
\\ 
exhaust$^*$   & 0.53M & 1.25M & 2.21M & 14.31M & 29.13M & 69.12M & 11.05M & 28.55M & 53.08M
\\ 
mix   & 0.46M & 10.89M & 22.25M & 2.25M & 111.02M & 260.13M & 10.85M & 120.02M & 239.17M
\\
mix$^*$   & 0.43M &  3.79M & 7.40M & 2.02M & 48.46M & 108.28M & 10.79M & 76.24M & 150.20M
\\
\bottomrule
\end{tabular}
\end{adjustbox}
\caption{Fill-in, number of column operations and number of bitflips for (top) alpha shapes filtrations of $10000$ points, sampled respectively from a cube, a swissroll, and a torus in $\mathbb{R}^3$, generated using \cite{tadaset} (each value is the average of 5 random samplings from each shape); (middle) a Vietoris--Rips filtration up to degree $2$ of 50 random points in $\mathbb{R}^{16}$, 100 random points in $\mathbb{R}^4$, and 102 points in $\mathbb{R}^{60}$ of the senate dataset (taken from~\cite{roadmap}); (bottom) the lower star filtrations of the nucleon ($41\times41\times41$ voxels, 68 KB), fuel ($64\times64\times64$ voxels, 256 KB), and tooth ($103\times94\times161$ voxels, 1.5 MB) image from~\cite{images_database}. ``M'' stands for millions, $\ast$ for the dualized matrix. The best performance in each column is in bold.}
\label{tbl:statistics_table3}
\end{table}

\begin{table}[h!]
\begin{adjustbox}{width=\columnwidth,center}
\begin{tabular}[t]{lrrrrrrrrr}
\toprule
\multicolumn{1}{c}{ } & \multicolumn{3}{c}{50 points} & \multicolumn{3}{c}{75 points} & \multicolumn{3}{c}{100 points} \\
\cmidrule(l{3pt}r{3pt}){2-4} \cmidrule(l{3pt}r{3pt}){5-7} \cmidrule(l{3pt}r{3pt}){8-10}
Algorithm & fill-in & Col.ops & Bitflips& fill-in & Col.ops & Bitflips& fill-in & Col.ops & Bitflips\\
\midrule 
twist   & 13,933 & 1.48M & 40.61M & 56,424 & 12.13M & 794.68M & 0.16M & 53.91M & 6,459.57M
\\
twist$^*$  & 1.10M & 11,510 & 21.07M & 9.58M & 58,050 & 497.59M & 44.11M & 184,441 & 4,532.40M 
\\ 
swap & 3,799 & 0.74M & 3.41M & 8,610 & 4.36M & 28.19M & 15,401 & 15.81M & 154.84M
\\ 
swap$^*$  & 0.57M & \bf{8,283} & 9.06M & 5.05M & \bf{39,071} & 249.32M & 22.49M & \bf{0.12M} & 2,334.65M
\\ 
retro  & \bf{1,274} & 81,191 & \bf{0.28M} & \bf{2,849} & 0.34M & \bf{3.28M} & \bf{5,049} & 0.97M & \bf{20.89M}
\\ 
retro$^*$  & 3.81M & 15,563 & 70.71M & 32.70M & 76,790 & 1,582.02M & 147.23M & 0.25M & 13,599.27M
\\ 
exhaust  & 12,735 & 1.10M & 22.16M & 60,825 & 9.58M & 507.11M & 0.19M & 44.10M & 4,576.32M
\\ 
exhaust$^*$ & 1.48M & 12,708 & 39.14M & 12.13M & 53,649 & 782.60M & 53.92M & 0.16M & 6,405.81M
\\ 
mix  & 8,278 & 1.38M & 23.00M & 35,264 & 11.48M & 520.24M & 0.11M & 51.59M & 4,681.67M
\\
mix$^*$  & 1.07M & 30,183 & 36.07M & 9.43M & 0.15M & 741.30M & 43.60M & 0.45M & 6,151.18M
\\
\bottomrule
\end{tabular}
\end{adjustbox}
\caption{Fill-in, number of column operations, and number of bitflips for the shuffled filtrations on 50, 75, and 100 points. Each value is the average of 5 iterations. ``M'' stands for millions, $\ast$ for the dualized matrix. The best performance in each column is in bold.}
\label{tbl:statistics_table_shuffled}
\end{table}

\subsection{Sparsity and bitflips}
We examine the number of nonzero entries of each variant's final reduced boundary matrices, called the \define{fill-in}.
Moreover, we count the number of column additions for each variant to see the effect on efficiency. 
For a more detailed picture, we also count the number of \define{bitflips}
of the algorithm: when adding a column $R_k$ to $R_\ell$, the size of $R_k$ equals the number
of entries in $R_\ell$ that needs to be flipped, and the number of bitflips is the accumulated
number of such flips over all column additions. 
We remark that the name ``bitflip'' is a slight abuse, as it refers to a model where each boundary matrix entry is stored as $1$ bit, which is not necessarily what happens in practice, but it is nevertheless a descriptive name, which is why we chose it.
We expect the number of bitflips to be a good indicator of the practical performance of an algorithm, as the bulk of the running time of matrix reduction is usually spent on column additions. 

The choice of the column representation has no influence on this experiment. 
On the contrary, the number changes dualizing the input matrix. 
Therefore, we tested each algorithm both on the primal and on the dual matrix.

\cref{tbl:statistics_table3,tbl:statistics_table_shuffled} show the outcome for one instance per filtration type. 
We can see that swap reduction consistently leads to smaller reduced matrices and also reduces 
the number of column operations and bitflips, compared to twist reduction (with the exception
of dualized Vietoris--Rips filtrations, where the number of column operations is very small).
The difference is sometimes marginal, though. 
Consistently across the datasets, we can see that the retrospective, followed closely by the mix, outputs much sparser matrices, with the notable exception of dualized Vietoris--Rips.
However, the number of column operations and bitflips is generally not decreasing. 
The exhaustive reduction is also producing similarly sparse matrices, but for shuffled filtrations. 
However, the number of bitflips seems to be generally higher than for the retrospective.
Note that in some examples, the ratio of column operations and bitflips for exhaustive and retrospective is close to one (e.g., in the middle part of Table~1), meaning that the majority of column operations add columns with a single non-zero entry. This shows that the compression strategy is particularly beneficial for these instances.
Finally, the numbers indicate that mix reduction is not a successful strategy: even if it manages to obtain sparse reduced matrices, often sparsifying comparably to the retrospective, it requires many more column operations and many more bitflips (to an extent that surprised the authors).

Hence, our novel variants do improve sparsity quite consistently, but this does not automatically lead to improved performances. 
According to these experiments, there is no direct correlation between sparser matrices and fewer bitflips.

\subsection{Data structures}
We consider the runtime next. 
In particular, we look at the influence of the column representation on the performance
of the algorithm. 
For that, we run each of our algorithms with each of the $8$ available representations in \phat (we refer to \cite{phat_paper} for an extensive description of the data structures).
We show the running times for an alpha shape filtration and a Vietoris--Rips filtration in \cref{celegans_datatype}, for a lower star filtration and shuffled filtration in \cref{shuffled_datatype}. 
Note that these tables show only the running time for the matrix reduction; the time to read the input file into memory and to (potentially) dualize the matrix (both of which usually take more time than the reduction itself) is not shown. 

\begin{table}[h!]
\begin{adjustbox}{width=\columnwidth,center}
\begin{tabular}[t]{lrrrrrrrr}
& List & Vector & Set & Heap & A-Heap & A-Set & A-Full & A-Bit-Tree 
\\
\hline
twist  & 0.3 & 0.1 & 0.2 & 0.2 & 0.2 & 0.2 & 0.2 & \bf{0.1} 
\\
twist$^*$ & 7.2 & 0.3 & 0.6 & 0.4 & 0.4 & 0.5 & 0.3 & 0.2
\\
swap  & 0.3 & \bf{0.1} & 0.2 & 0.4 & 0.4 & 0.2 & 0.2 & 0.4
\\
swap$^*$ & 9.4 & 0.3 & 0.6 & 21.4 & 10.0 & 0.5 & 0.3 & 5.5 
\\
retro  & 0.7 & 0.6 & 0.8 & 1.0 & 0.7 & 0.7 & 0.7 & 0.7 
\\
retro$^*$ & 0.7 & \bf{0.5} & 0.8 & 0.8 & 0.6 & 0.6 & 0.6 & 0.6  
\\
exhaust  & 0.7 & 0.5 & 0.9 & 0.8 & 0.8 & 0.7 & 0.7 & 0.7 
\\
exhaust$^*$ & 0.7 & \bf{0.4} & 0.7 & 0.6 & 0.6 & 0.6 & 0.5 & 0.5  
\\
mix  & 0.9 & 0.6 & 1.1 & 1.0 & 1.0 & 1.0 & 1.0 & 1.0  
\\
mix$^*$ & 0.8 & \bf{0.5} & 1.0 & 0.8 & 0.8 & 0.7 & 0.8 & 0.8 
\\
\bottomrule
\end{tabular}
\end{adjustbox}
\begin{adjustbox}{width=\columnwidth,center}
\begin{tabular}[t]{lrrrrrrrr}
& List & Vector & Set & Heap & A-Heap & A-Set & A-Full & A-Bit-Tree 
\\
\hline
twist & 3.9 & 1.5 & 3.4 & 3.8 & 3.5 & 3.5 & 2.2 & 2.4 
\\
twist$^*$ & 0.7 & \bf{0.1} & \bf{0.1} & 0.3 & \bf{0.1} & 0.2 & \bf{0.1} & \bf{0.1} 
\\
swap & 4.0 & 1.6 & 3.4 & 5.1 & 4.1 & 3.6 & 2.5 & 9.0 
\\
swap$^*$ & 0.8 & \bf{0.1} & \bf{0.1} & 1.2 & 0.8 & 0.2 & \bf{0.1} & 0.5 
\\
retro  & 1.7 & \bf{1.4} & 1.9 & 3.3 & 2.0 & 2.5 & 1.8 & 2.0 
\\
retro$^*$ & 73.9 & 6.1 & 42.8 & 155.8 & 191.7 & 237.3 & 120.6 & 38.7 
\\
exhaust & 1.3 & \bf{0.9} & 1.5 & 1.8 & 1.7 & 1.4 & 1.3 & 1.5 
\\
exhaust$^*$ & 46.0 & 1.7 & 12.1 & 13.5 & 12.6 & 11.7 & 5.7 & 2.3 
\\
mix & 1.3 & \bf{0.9} & 1.4 & 1.8 & 1.4 & 1.4 & 1.4 & 1.4 
\\
mix$^\ast$ & +5m & 186.1 & 81.1 & +5m & +5m & 65.0 & 20.5 & +5m
\\
\bottomrule
\end{tabular}
\end{adjustbox}
\caption{(Top) Alpha filtrations on 40,000 points on a cube, average reduction time over 5 random samplings. (Bottom) Vietoris--Rips filtration, 297 points in ambient dim 202 up to degree $2$ (celegans dataset from~\cite{roadmap}). All timings are in seconds but for the timeout (minutes), $\ast$ stands for the dualized matrix, and the best runtime(s) per algorithm (both over primal and dual input) is in bold.}
\label{celegans_datatype}
\end{table}

\begin{table}[h!]
\begin{adjustbox}{width=\columnwidth,center}
\begin{tabular}[t]{lrrrrrrrr}
& List & Vector & Set & Heap & A-Heap & A-Set & A-Full & A-Bit-Tree 
\\
\hline
twist  & +5m & 12.9 & 3.0 & 3.0 & 3.0 & 3.4 & 2.3 & 2.0
\\
twist$^*$ & +5m & +5m & 4.8 & 4.8 & 4.0 & 4.2 & 2.5 & \bf{1.9}
\\
swap  & +5m & 12.9 & 2.8 & +5m & +5m & 3.4 & \bf{2.4} & +5m 
\\
swap$^*$ & +5m & +5m & 4.8 & +5m & +5m & 4.2 & 2.6 & +5m
\\
retro  & 5.6 & \bf{4.7} & 6.2 & 9.1 & 6.9 & 7.2 & 6.6 & 7.1 
\\
retro$^*$ & 6.8 & 4.9 & 7.1 & 9.4 & 7.2 & 7.7 & 7.0 & 7.5  
\\
exhaust  & 5.8 & \bf{4.1} & 6.7 & 6.6 & 6.9 & 6.6 & 5.9 & 6.3 
\\
exhaust$^*$ & 7.8 & 4.7 & 8.4 & 7.1 & 7.3 & 7.2 & 6.3 & 6.7  
\\
mix  & 14.0 & 9.0 & 15.8 & 15.6 & 16.2 & 15.6 & 13.9 & 18.6  
\\
mix$^*$ & 14.0 & \bf{8.1} & 15.2 & 13.8 & 13.4 & 13.0 & 11.7 & 14.5
\\
\bottomrule
\end{tabular}
\end{adjustbox}
\begin{adjustbox}{width=\columnwidth,center}
\begin{tabular}[t]{lrrrrrrrr}
& List & Vector & Set & Heap & A-Heap & A-Set & A-Full & A-Bit-Tree 
\\
\hline
twist  & 28.6 & 6.8 & 58.2 & 59.4 & 54.8 & 46.8 & 6.8 & 2.1
\\
twist$^*$ & 47.4 & 4.4 & 69.2 & 55.9 & 52.6 & 41.1 & 4.7 & \bf{1.3} 
\\
swap  & 2.8 & \bf{0.5} & 2.4 & 6.9 & 4.9 & 1.7 & 0.6 & 3.4 
\\
swap$^*$ & 28.3 & 1.9 & 38.0 & 86.5 & 56.0 & 22.9 & 6.7 & 14.2
\\
retro  & 0.2 & \bf{0.1} & 0.3 & 0.7 & 0.6 & 0.7 & 0.4 & 0.2
\\
retro$^*$ & 100.1 & 17.2 & 257.9 & 432.7 & 452.0 & 420.5 & 271.6 & 55.7 
\\
exhaust  & 17.8 & 4.4 & 36.6 & 36.4 & 33.8 & 29.3 & 4.4 & \bf{1.4}
\\
exhaust$^*$ & 51.9 & 6.5 & 85.6 & 87.5 & 82.7 & 64.5 & 8.0 & 2.2 
\\
mix  & 20.5 & 4.8 & 37.8 & 37.8 & 35.0 & 30.3 & 4.8 & \bf{1.7}   
\\
mix$^*$ & 143.4 & 7.9 & 117.8 & 211.9 & 158.0 & 65.4 & 89.2 & 29.7
\\
\bottomrule
\end{tabular}
\end{adjustbox}
\caption{(Top) Lower star filtration for the tooth image from~\cite{images_database} ($104\times 91\times 161$ voxels, 1.5 MB). (Bottom) Average reduction times for 5 iteration of a shuffled filtration on 75 points. All timings are in seconds but for the timeout (minutes), $\ast$ stands for the dualized matrix, and the best runtime per algorithm (both over primal and dual input) is in bold.}
\label{shuffled_datatype}
\end{table}

First of all, the tables confirm the earlier findings in~\cite{phat_paper}: the performance of the twist algorithm
highly varies depending on the chosen data structure, and the best results are achieved using the A-Bit-Tree representation, which is also the default in \phat.
Moreover, for Vietoris--Rips filtrations, it is highly beneficial to dualize the matrix.

The swap reduction generally performs similarly to twist, working fast with dualization on Vietoris--Rips complexes. 
Remarkably, it performs much better if run with A-Full on the lower star filtration.
Swap is generally slower on the data structures Heap, A-Heap, and A-Bit-Tree. 
The explanation lies in the way how we use these data structures
to represent columns, which does not permit a constant-time access to the size of a column. 
As the size is queried before every column addition, this results in a considerable slow-down for the swap algorithm.

We observe that the performance for the retrospective algorithm seems more stable across different data structures than for other algorithms on alpha and lower star filtrations. 
We speculate that the reason for this general stability is the general sparsity of the columns, which reduces the importance of how the entries are stored in memory.
We also observe that it sometimes, but not always, improves on the exhaustive algorithm.
Remarkably, the retrospective is competitive in practice (for example, on the alpha filtration) even if it is not implemented with the clearing optimization. 
We are not aware of other variants with this characteristic.

As expected from \cref{tbl:statistics_table3,tbl:statistics_table_shuffled}, the mix algorithm generally has the poorest practical performance among all tested algorithms. 
We therefore leave it out in further comparisons.

Based on our experiments, we identified that twist works most efficiently in combination with A-Bit-Tree (as previously known), and the retrospective and the exhaustive algorithms work best with the Vector representation but for the shuffled filtrations, where exhaustive should be paired with A-Bit-Tree. 
The swap reduction sometimes works best with Vector and sometimes with A-Full. 
Since the advantage of A-Full was generally more significant, we chose A-Full for subsequent experiments. 

\subsection{Performance on large datasets}
We now compare the performance on larger instances.
We focus on the combinations that were identified to be most efficient in the previous experiments. 
We run each algorithm on the original matrix and its dual and pick the better of the two runtimes.

The results are displayed in \cref{tbl:runtime_table}. 
We observe that retrospective is systematically by 1 or 2 orders of magnitude better than twist for shuffled, and worse by one or two orders of magnitude for Vietoris--Rips filtrations.
Also, it is interesting that whenever twist is faster after dualizing, retrospective is faster without dualization and vice versa. 
We observe that swap is always as fast or slightly slower than the twist on the first three filtrations. 
It is decidedly faster only on shuffled filtrations; however, it is not as fast as retrospective. 
Moreover, swap and twist mostly perform better on the same type of matrices: original for alpha shape and lower star filtrations, and dual on Vietoris--Rips. 
Remarkably, swap performs better on shuffled filtrations without dualization, unlike the twist reduction. 
Interesting is the case of lower star filtrations: while here it is clear that swap and exhaustive are not the best performing, it is more difficult to choose between twist and retrospective. 
Indeed, while retrospective is generally better than twist, sometimes the latter takes only half the time. 
This behavior does not correlate with the input sizes. 
We theorize that it depends on the number of short-lived bars: when there are more, the sparsification of the retrospective pays off (as it happens in the \textit{bonsai} image), while when there are fewer, like in the \textit{skull} image, it does not.
Finally, the exhaustive scales as retrospective on alpha shape filtrations but is slower by two orders of magnitude on the shuffled filtration. 
Moreover, exhaustive performs slightly better than retrospective on Vietoris--Rips filtrations, even if not as good as swap and twist. 

\begin{table}[h!]
\begin{adjustbox}{width=\columnwidth,center}
\begin{tabular}[t]{lrrrrrrrrr}
\toprule
\multicolumn{1}{c}{ } & \multicolumn{3}{c}{Cube} & \multicolumn{3}{c}{Swissroll}& \multicolumn{3}{c}{Torus} \\
\cmidrule(l{3pt}r{3pt}){2-4} \cmidrule(l{3pt}r{3pt}){5-7} \cmidrule(l{3pt}r{3pt}){8-10}
Algorithm & 40K & 80K & 160K & 40K & 80K & 160K & 40K & 80K & 160K \\
\midrule
twist+A-Bit-Tree  & \textbf{0.1} & \textbf{0.3} & \textbf{0.8} & \textbf{0.1} & \textbf{0.2} & \textbf{0.5} & \textbf{0.2} & \textbf{0.4} & \textbf{0.8}
\\
swap+A-Full  & 0.2 & 0.4 & 0.9 & \textbf{0.1} & \textbf{0.2} & \textbf{0.5} & 0.3 & 0.6 & 1.5
\\
retro+Vector  & $^\ast$0.5 & $^\ast$1.2 & $^\ast$2.7 & $^\ast$0.5 & $^\ast$1.2 & $^\ast$2.7 & $^\ast$1.0 & $^\ast$2.3 & $^\ast$5.4
\\
ex+Vector & $^\ast$0.4 & $^\ast$1.0 & $^\ast$2.3 & $^\ast$0.4 & $^\ast$1.0 & $^\ast$2.3 & $^\ast$0.8 & $^\ast$1.8 & $^\ast$4.2
\\
\bottomrule
\end{tabular}
\end{adjustbox}
\begin{adjustbox}{width=\columnwidth,center}
\begin{tabular}[t]{lrrrrrrrrr}
\toprule
Algorithm  & Hydrogen & Shockwave & Lobster & MRI Head & Engine & Statue Leg & Bonsai & Skull \\
\midrule 
twist+A-Bit-Tree  & 12.3 & 12.2 & \textbf{25.8} & \textbf{13.8} & 54.5 & 53.9 & 177.9 & \textbf{24.2}
\\
swap+A-Full & 17.3 & 15.3 & 37.8 & 14.9 & 74.3 & 71.0 & 264.4 & 25.0 
\\
retro+Vector  & $^\ast$\textbf{10.9} & $^\ast$\textbf{10.4} & $^\ast$30.3 & $^\ast$30.3 & $^\ast$\textbf{54.4} & $^\ast$\textbf{50.8} & $^\ast$\textbf{150.6} & $^\ast$62.4 
\\
ex+Vector  & $^\ast$28.6 & $^\ast$21.9 & $^\ast$62.2 & $^\ast$29.3 & $^\ast$239.2 & $^\ast$102.5 & $^\ast$+5m & $^\ast$58.3 
\\
\bottomrule
\end{tabular}
\end{adjustbox}
\begin{adjustbox}{width=\columnwidth,center}
\begin{tabular}[t]{lrrrrrrrrrrrr}
\toprule
\multicolumn{1}{c}{ } & \multicolumn{6}{c}{Vietoris--Rips}& \multicolumn{3}{c}{Shuffled} \\
\cmidrule(l{3pt}r{3pt}){2-7} \cmidrule(l{3pt}r{3pt}){8-10} 
Algorithm & 297 & 300 & 445 & 512 & 1000 & 1000 & 100 & 125 & 150
\\
\midrule
twist+A-Bit-Tree & \textbf{$^\ast$0.1} & \textbf{$^\ast$0.1} & \textbf{$^\ast$0.3} & \textbf{$^\ast$0.5} & \textbf{$^\ast$3.2} & \textbf{$^\ast$3.0} & $^\ast$8.4 & $^\ast$39.2.5 & $^\ast$154.4
\\
swap+A-Full & \textbf{$^\ast$0.1} & \textbf{$^\ast$0.1} & \textbf{$^\ast$0.3} & $^\ast$0.8 & $^\ast$4.4 &$^\ast$3.6 & 2.7 & 8.1 & 23.4
\\
retro+Vector & 1.6 & 2.1 & 8.1 & 15.4 & 170.7 & 171.0 & \textbf{0.3} & \textbf{0.9} & \textbf{2.9}
\\
ex+Vector/A-Bit-Tree & 0.9 & 1.2 & 4.8 & 9.3 & 121.2 & 119.5 & 10.8 & 50.0 & 192.8
\\
\bottomrule
\end{tabular}
\end{adjustbox}
\caption{Running times (in seconds) on alpha shape filtrations on 40K, 80K, and 160K points sampled from a cube, a swissroll, and a torus data sets, (top) and various data sets for the lower star, Vietoris--Rips, and shuffled filtrations. The images are: hydrogen ($128\times 128\times 128$ voxels, 2 MB),
shockwave ($64\times 64\times 512$, 2.0 MB), lobster ($301\times324\times56$ voxels, 5.2 MB), MRI head ($256\times 256\times 124$ voxels, 7.8 MB), engine ($256\times 256\times 128$ voxels, 8 MB), statue leg ($341\times 341 \times 93$ voxels, 10.3 MB), bonsai ($256\times 256\times 256$ voxels, 16 MB), and skull ($256\times256\times256$ voxels, 16.0 MB) from \cite{images_database}. The Vietoris--Rips datasets are the celegans, vicsek 1, house, fractal linear edge, 1000 random points in $R^8$, and dragon from~\cite{roadmap}, ordered by their numbers of points displayed at the top. Each value for the shuffled filtration is the average over 5 samplings. The $^\ast$ signals that the running time was achieved using the dual matrix. The best performance per input is highlighted in bold.}
\label{tbl:runtime_table}
\end{table}
 
In general, we see that our variants keep the matrix sparse for the shuffled filtration and this leads to a significant improvement in efficiency. 
Likely, this happens because, in the reduction of random filtrations, each column is added several times. 
Thus, keeping it sparse pays off.

\subsection{Memory consumption}
We also tested the memory peak consumption of the algorithms (\cref{tbl:memory_table}). 
For the alpha shape filtrations, the twist and the swap are better than the retrospective and exhaustive. 
On the other three filtrations, the exhaustive is generally better than any other algorithms. 
Twist and swap are quite similar on alpha and Vietoris--Rips filtrations, but swap uses considerably less memory on shuffled, possibly because it does not need to dualize. 
In these experiments, the memory overhead for dualizing the input matrix is included in these numbers, which might partially explain why the dualized instances usually take more memory. 

\begin{table}[h!]
\begin{adjustbox}{width=\columnwidth,center}
\begin{tabular}[t]{lrrrrrrrrr}
\toprule
\multicolumn{1}{c}{ } & \multicolumn{3}{c}{Cube} & \multicolumn{3}{c}{Swissroll} & \multicolumn{3}{c}{Torus}
\\
\cmidrule(l{3pt}r{3pt}){2-4} \cmidrule(l{3pt}r{3pt}){5-7} \cmidrule(l{3pt}r{3pt}){8-10} 
Algorithm & 40K & 80K & 160K & 40K & 80K & 160K & 40K & 80K & 160K 
\\
\midrule
twist+A-Bit-Tree  & 0.10G & 0.20G & 0.40G & 0.13G & 0.26G & 0.54G & 0.23G & 0.48G & 1.01G
\\
swap+A-Full & 0.13G & 0.25G & 0.50G & 0.16G & 0.32G & 0.67G & 0.28G & 0.60G & 1.27G
\\
retro+Vector  & $^\ast$0.18G & $^\ast$0.35G & $^\ast$0.71G & $^\ast$0.23G & $^\ast$0.47G & $^\ast$0.97G & $^\ast$0.40G & $^\ast$0.86G & $^\ast$1.86G
\\
exhaustive+Vector & $^\ast$0.18G & $^\ast$0.35G & $^\ast$0.71G & $^\ast$0.23G & $^\ast$0.47G & $^\ast$0.97G & $^\ast$0.40G & $^\ast$0.86G & $^\ast$1.86G
\\
\bottomrule
\end{tabular}
\end{adjustbox}
\begin{adjustbox}{width=\columnwidth,center}
\begin{tabular}[t]{lrrrrrrrrr}
\toprule
Algorithm & Hydrogen & Shockwave & Lobster & MRI Head & Engine & Statue Leg & Bonsai & Skull \\
\midrule 
twist+A-Bit-Tree & 3.17G & 3.20G & 6.70G & 5.78G & 11.88G & 13.44G & 34.99G & 12.02G
\\
swap+A-Full & 3.55G & 3.56G & 7.63G & 7.16G & 13.32G & 15.33G & 37.91G & 14.94G
\\
retro+Vector & $^\ast$9.16G & $^\ast$4.66G & $^\ast$9.11G & $^\ast$10.31G & $^\ast$15.65G & $^\ast$18.47G & $^\ast$42.06G & $^\ast$21.55G
\\
ex+Vector & $^\ast$2.78G & $^\ast$2.84G & $^\ast$6.92G & $^\ast$10.31G & $^\ast$10.73G & $^\ast$13.68G & $^\ast$26.48G & $^\ast$21.55G
\\
\bottomrule
\end{tabular}
\end{adjustbox}
\begin{adjustbox}{width=\columnwidth,center}
\begin{tabular}[t]{lrrrrrrrrrrrr}
\toprule
\multicolumn{1}{c}{ } & \multicolumn{6}{c}{Vietoris--Rips}& \multicolumn{3}{c}{Shuffled} \\
\cmidrule(l{3pt}r{3pt}){2-7} \cmidrule(l{3pt}r{3pt}){8-10} 
Algorithm & 297 & 300 & 445 & 512 & 1000 & 1000 & 100 & 125 & 150
\\
\midrule
twist+A-Bit-Tree & $^\ast$0.66G & $^\ast$0.68G & $^\ast$2.22G & $^\ast$3.41G & $^\ast$25.18G & $^\ast$25.18G & $^\ast$0.37G & $^\ast$1.09G & $^\ast$2.85G
\\
swap+A-Full & $^\ast$0.76G & $^\ast$0.78G & $^\ast$2.54G & $^\ast$3.87G & $^\ast$28.82G & $^\ast$28.82G & 22.51M & 40.86M & 72.05M
\\
retro+Vector & 0.45G & 0.46G & 1.50G & 2.28G & 16.93G & 16.93G & 31.47M & 62.66M & 118.13M
\\
ex+Vector/A-Bit-Tree & 0.35G & 0.36G & 1.15G & 1.75G & 13.05G & 13.05G & 19.17M & 33.69M & 56.10 M
\\
\bottomrule
\end{tabular}
\end{adjustbox}
\caption{Memory peak consumption for the best-performing combination of algorithms, data structures, and dualization. The $^*$ signals that the running time was achieved using the dual matrix. ``G'' and ``M'' stand for gigabyte and megabyte, respectively. The inputs are the datasets from \cref{tbl:runtime_table}.}
\label{tbl:memory_table}
\end{table}

\subsection{Summary of the experiments}
The experiments show that, while the sparsity of the matrix is a necessary condition for efficiency, it is, perhaps surprisingly, not always a sufficient one. 
The general behavior appears to be that the more structured the data, the less it pays off to sparsify.

For highly structured data, such as Vietoris--Rips and alpha filtrations, the price of keeping the matrix sparse exceeds its advantages. 
Indeed, the only competitive sparsifying algorithm in this situation is the swap algorithm, which does only a few extra operations and does not sparsify aggressively. 

On the other hand, for random data such as the shuffled filtration, sparsifying is a winning strategy. 
In this situation, the retrospective, which tries most aggressively to sparsify the matrix, is by two orders of magnitude faster than the twist, and the swap outperforms the twist by one order of magnitude. 

Another emerging behavior is that exhaustive performs quite badly across all input data. 
Possibly, this depends on the fact that while it does many operations, it keeps the matrix less sparse than other sparsifying algorithms. 
In other words, it pays a price to sparsify but the payoff is lower.

Last but not least, the experiments on the lower star filtrations indicate that the winning strategy is to use either the twist or the retrospective, but there is no emerging trend indicating when to use which. 
Even more, usually the retrospective is faster, but when it is not it performs much worse than the twist. 
A possible explanation is that this depends on the number of shorter bars: when there are more, and therefore more computation is required, sparsifying pays off.

\section{Output-sensitive bounds}\label{S_bounds}

The idea of the retrospective reduction is to keep reducing the columns-to-be-added using the newly found pivots. 
This, together with the fact that pivots encode information about persistence pairs, allows us to bound the number of bitflips with the persistence Betti numbers. 
To prove the bounds, we group the column additions into complementary classes: forward/backward, depending on if the addition is left-to-right or right-to-left, and non-/ interval, according to whether the column indices are outside or inside some persistence interval. 
The first bound is then obtained by counting the bitflips from forward additions directly and the ones from backward using interval additions. 
The second bound follows by counting the bitflips from non-interval additions directly and the ones from intervals using forward and backward additions.
\medskip

Let us write $P$ for the set of pivot pairs of a filtered simplicial complex and
\[
\overline{P}:= P\cup\{(i,N+1)\mid \text{$\sigma_i$ is essential}\} \, .
\] 
The \define{Betti number} $\betti{k}$ for $1\leq k\leq N$ is defined as
$\betti{k} := \# \{(i,j)\in \overline{P} \mid i\leq k < j\}\}$.
The topological interpretation is that $\betti{\ell}$ is the number of holes in the complex $\altcomplex_{\ell}$. The \define{persistent Betti number} $\betti{k,\ell}$, for $1\leq k \leq \ell \leq N$ is defined as
\[\betti{k,\ell}:= \# \{(i,j)\in\overline{P}\mid i\leq k, \ j > \ell\} \]
and gives the number of holes that are persistently present in all complexes $\altcomplex_{k},\ldots,\altcomplex_{\ell}$.
In \cref{{algorithm_lr_red}}, $R_k$ is \define{pivoted} after the procedure {\tt Reduce} is invoked with $k$ as the argument in the \textbf{for} loop of the procedure \main.

\begin{observation}\label{obs:terminate}
By construction of the {\tt Reduce} procedure, for any step $\ell$, after any invocation of {\tt Reduce}$(j)$, the entries above $\lowp{R_j}$ are unpaired at $\ell$. In particular, after the first invocation of {\tt Reduce}$(j)$, the number of entries above $\lowp{R_j}$ is bounded above by $\beta_j$.
\end{observation}

The addition of $R_k$ to $R_{\ell}$ is called \define{forward} if $k < \ell$ and \define{backward} if $k > \ell$.
A \define{forward} (resp. \define{backward}) \define{bitflip} is a bitflip resulting from a forward (backward) addition.
For integers $k,\ell$, the bitflip of $R_{\ell}^i$ when adding $R_k$ to $R_{\ell}$ is an \define{interval bitflip} for a pair $(i,j)\in\overline{P}$ if $k, \ell \in \{i+1,\dots, j\}$, and \define{non-interval} otherwise. For the retrospective algorithm, the interval bitflips dominate the cost. Moreover, this cost can be bounded in terms of persistence Betti numbers and index persistence.

\begin{observation}\label{obs:onlyback}
For every $k = 1,\dots, N$, the first iteration of {\tt Reduce$(k)$} finds the pivot, and the subsequent iterations perform only backward additions to $R_k$.
\end{observation}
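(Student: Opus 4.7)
The plan is to read ``first iteration of \texttt{Reduce}$(k)$'' as the initial invocation of \texttt{Reduce}$(k)$ triggered by \texttt{Main} at step $k$, and ``subsequent iterations'' as the later recursive invocations of \texttt{Reduce}$(k)$ that arise while some column $R_\ell$ with $\ell > k$ is being processed. With this reading I would split the statement into its two halves and handle each separately.

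For the first half, I would trace the code of \texttt{Reduce}$(k)$ as called from \texttt{Main}. Termination of the while loop follows from the fact that, after adding \texttt{Reduce}$(P[\ell_0])$ (which by \cref{obs:terminate} returns a column whose pivot is exactly $\ell_0$ and which has no entries above $\ell_0$), the largest paired index in $R_k$ strictly decreases: row $\ell_0$ is cancelled, rows above it remain untouched and are unpaired by maximality of $\ell_0$, so the new largest paired index lies strictly below $\ell_0$. When the loop exits, the final conditional sets $P[\lowp{R_k}] \gets k$ whenever $R_k \neq 0$, so the pivot of column $k$ is determined and recorded. By \cref{obs:terminate}, specialized to step $k$, the rows above $\lowp{R_k}$ in $R_k$ are then unpaired at $k$.

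For the second half, I would fix a subsequent invocation of \texttt{Reduce}$(k)$ occurring during the processing of some $R_\ell$ with $\ell > k$, and consider an arbitrary iteration of its while loop, which picks the largest paired row $\ell'$ in $R_k$ and performs the addition of \texttt{Reduce}$(P[\ell'])$ to $R_k$. By \cref{obs:terminate} applied to the immediately preceding invocation of \texttt{Reduce}$(k)$ --- at step $k$ if this is the first re-invocation, otherwise at some $\ell'' \in (k,\ell)$ --- the rows above $\lowp{R_k}$ were unpaired at that earlier step. Hence any currently paired row $\ell'$ above the pivot must have acquired its pair $P[\ell'] = m$ at some step strictly between the two invocations, forcing $m > k$, so the addition $R_m \to R_k$ is backward.

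The only delicate point, and what I expect to be the main obstacle, is the pivot row $\ell' = \lowp{R_k}$ itself: it is paired with $k$, so taken literally the while loop could pick it and trigger the infinite self-recursion \texttt{Reduce}$(k) \to$ \texttt{Reduce}$(k)$. The argument goes through only once it is made explicit that the algorithm does not treat this self-pivot as a paired entry to be further reduced --- a convention that is already forced by \cref{obs:terminate}, whose conclusion would otherwise fail. After that is clarified, everything else is bookkeeping about when entries of $R_k$ first become paired in the array $P$.
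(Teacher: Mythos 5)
The paper never actually proves \cref{obs:onlyback}: like \cref{obs:terminate}, it is asserted as holding by construction of \cref{algorithm_lr_red}, so there is no official argument to diverge from. Your reading of ``iteration'' as ``invocation'' is the intended one (it is the reading under which the observation is used in the proof of \cref{lem:backregular}), and your two-part argument --- termination of the first invocation because the largest paired index strictly decreases, and the timing argument that any row of $R_k$ above $\lowp{R_k}$ that is paired at the time of a later invocation must have acquired its pair after step $k$ (by \cref{obs:terminate} applied to the preceding invocation of \texttt{Reduce}$(k)$, plus the fact that $R_k$ is not modified between its invocations), hence is paired with some $m>k$ and the addition is backward --- is precisely the construction-level reasoning the authors leave implicit. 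Your ``delicate point'' is also a genuine defect of the pseudocode as literally written: in a re-invocation of \texttt{Reduce}$(k)$ the entry at row $\lowp{R_k}$ is paired (with $k$ itself) and is the largest paired index, so the loop would recurse on itself; the intended, unstated convention is that the column's own pivot is excluded, and making that explicit is a useful addition.

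Two small corrections. In the paper's convention ``above the pivot'' means smaller row indices, so \texttt{Reduce}$(P[\ell_0])$ may well return a column with entries above $\ell_0$; the right justification for termination is that it has no entries in rows below $\ell_0$ (larger indices) and that the rows above $\ell_0$ it does touch are unpaired by \cref{obs:terminate}, so the largest paired index still strictly decreases. Also, your claim that the returned column has pivot exactly $\ell_0$ presupposes that recorded pivots never change under later backward additions, which is part of what \cref{obs:onlyback} asserts for earlier-processed columns; a fully rigorous write-up should organize the argument as an induction over the chronological sequence of invocations (with \cref{obs:terminate} as the accompanying invariant), which your sketch uses implicitly but does not state.
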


\begin{lemma}\label{lem:backregular}
If $ k < \ell$ and $R_\ell$ is added to $R_k$, all the resulting bitflips are interval.
\end{lemma}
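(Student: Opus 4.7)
The plan is to identify precisely when a backward addition of $R_\ell$ to $R_k$ with $k<\ell$ can occur in the retrospective reduction, and then classify the row indices contributing each resulting bitflip. Such an addition arises only during some recursive invocation of $\texttt{Reduce}(k)$, triggered from an outer Main iteration $q\geq \ell$, when $\ell = P[r]$ for the largest paired row $r$ of $R_k$ at that moment. In particular $(r,\ell)\in P$, $r<\lowp{R_k}$, and the recursive call $\texttt{Reduce}(\ell)$ executed just before the outer addition does not alter $R_\ell^r=1$, since every inner addition inserts a column whose pivot is strictly above $r$.

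Before classifying the bitflips, I would prove a structural invariant by induction on the execution: for every column $m$, the nonzero rows of $R_m$ are contained in $\{1,\ldots,m-1\}$. Initially this holds by the strict upper-triangularity of $D$. It is preserved by each column addition performed by the algorithm: in a forward addition $R_j \to R_m$ with $j<m$ the claim is immediate by induction; in a backward addition $R_j \to R_m$ with $j>m$ the pivot of $R_j$ equals some row $r'<\lowp{R_m}<m$, so by induction the nonzero rows of $R_j$ lie in $\{1,\ldots,r'\}\subseteq \{1,\ldots,m-1\}$. Applied to our setting this yields $r<\lowp{R_k}<k$, which handles the ``$i<k$'' half of the interval condition for every bitflip row at once.

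With these in hand, I would split the nonzero rows of $R_\ell$ at the moment of the addition into the pivot $r$ and rows $i<r$. The pivot contributes a bitflip at row $r$, for which the pair $(r,\ell)\in \overline{P}$ witnesses the interval condition: $r<k$ by the invariant, and $\ell\leq \ell$ trivially. For any $i<r$ with $R_\ell^i=1$, \cref{obs:terminate} applied to the just-returned $\texttt{Reduce}(\ell)$ guarantees that $i$ is unpaired at step $q$. Hence either $\sigma_i$ is essential, giving $(i,N+1)\in\overline{P}$ with $N+1\geq \ell$, or $\sigma_i$ is paired with some $(i,j')\in P$ where necessarily $j'>q\geq \ell$, giving $(i,j')\in\overline{P}$ with $j'\geq \ell$. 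In both cases $i<r<k$ and $\ell\leq j'$, so the bitflip is interval, as required.

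The main difficulty I anticipate is the bookkeeping around the semantics of ``paired entry'' inside a recursive $\texttt{Reduce}$-call, namely the implicit convention that the current pivot of the column being reduced is not itself counted as paired (otherwise the while-loop would recurse into $\texttt{Reduce}(j)$ itself), together with the proof that the structural row invariant survives backward additions. Once the strict-decrease-of-pivots property of the recursion is pinned down, both the preservation of $R_\ell^r=1$ through the recursive call and the row invariant on $R_m$ follow cleanly, and the interval classification is then a direct application of \cref{obs:terminate} together with the definition of $\overline{P}$.
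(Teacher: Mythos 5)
Your proof is correct and takes essentially the same route as the paper's: both halves of the interval condition are established in the same way, namely $i \le \lowp{R_\ell} < \lowp{R_k} < k < \ell$ from the pivot structure of the backward addition, and partner index $\ge \ell$ for every flipped row via \cref{obs:terminate}; your additional bookkeeping (the inductively proved upper-triangularity invariant in place of the appeal to \cref{obs:onlyback}, the pivot/non-pivot case split, and the preservation of $R_\ell^r=1$ through the recursive call) only makes explicit what the paper leaves implicit. The lone nitpick is the claim $j' > q$ for rows unpaired at the moment of the addition---the partner could be the current outer column $q$ itself---but since $q \ge \ell$ this does not affect the needed conclusion $j' \ge \ell$.
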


\begin{proof} 
By Observation~\ref{obs:onlyback}, if the addition of $R_\ell$ to $R_k$ flips the $i$-th entry, then $i \leq \lowp{R_k}$. 
Thus, $i<k<\ell$. 
For the other inequality, by Observation~\ref{obs:terminate}, it follows that when $R_\ell$ is added to $R_k$ the entries in $R_\ell$ are not paired with indices less than $\ell$. 
\end{proof}

\begin{lemma}\label{lem:atmostonce} 
Any two columns are added to each other at most once backward and at most once forward. 
\end{lemma}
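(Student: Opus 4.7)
The plan is to establish a single invariant from which both bounds follow immediately: \emph{once $R_k^i$ becomes zero with row $i$ already paired, $R_k^i$ remains zero for the rest of the execution}. Granting this, the lemma is a one-line consequence. A forward addition of $R_a$ to $R_b$ (with $a<b$) is an execution of ``Add Reduce$(P[\ell])$ to $R_b$'' with $P[\ell]=a$, hence $\ell=\lowp{R_a}$; since $R_a^{\lowp{R_a}}=1$, the addition flips $R_b^{\lowp{R_a}}$ from $1$ to $0$. Because $\lowp{R_a}$ is paired with $a$, the invariant forbids this entry from returning to $1$, so no second forward addition of $R_a$ to $R_b$ can ever be triggered. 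The backward case is symmetric, using the row $i=\lowp{R_b}$ inside $R_a$.

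Before proving the invariant, I would verify that pairing is monotone: the assignment $P[\lowp{R_j}]\gets j$ in Reduce runs only after the while loop has cleared all paired entries from $R_j$, so the resulting pivot row is unpaired at that instant. Thus each $P[\ell]$ is written at most once and then persists; in particular, ``$\ell$ is paired at time $t$'' implies ``$\ell$ is paired at all later times''.

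Next, I would prove the invariant by inspecting every subsequent modification of $R_k$ after $R_k^i$ first becomes $0$. Such a modification is the addition of $R_c := R_{P[\ell']}$ for some paired row $\ell'$ selected by the while loop, and this addition is preceded by an internal call to Reduce$(c)$, after which $R_c$ has no paired entries above its pivot $\ell'$ by \cref{obs:terminate}. There are three cases: if $\ell'<i$, then $R_c^i=0$ by the definition of pivot; if $\ell'>i$, then $i$ lies above the pivot of $R_c$ and is paired, so $R_c^i=0$ by the Reduce$(c)$ cleanup; and if $\ell'=i$, then $c=P[i]$, but the surrounding Reduce$(k)$ only triggers this addition when $R_k^i=1$, contradicting the hypothesis $R_k^i=0$. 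In every case the modification leaves $R_k^i$ unchanged.

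The delicate step is the case $\ell'=i$: it is not purely syntactic but relies on reading the while loop in Reduce as choosing only those paired rows at which $R_k$ currently has a $1$, so $R_{P[i]}$ is never added unless $R_k^i$ is currently nonzero. Once this point is argued cleanly, both the forward and backward bounds fall out from the invariant with no further case analysis, and the symmetry between the two cases means essentially no extra work is needed for backward additions.
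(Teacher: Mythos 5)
Your argument is correct, and its backward half is essentially the paper's own: the paper likewise observes that $R_\ell$ is added to $R_k$ (for $\ell>k$) only when the paired row $\lowp{R_\ell}$ carries a nonzero entry of $R_k$, and that once this entry is eliminated it is never reintroduced because that row is paired. Where you genuinely differ is the forward direction and the packaging: the paper disposes of forward additions by a scheduling observation (a column $R_k$ with $k<\ell$ is added to $R_\ell$ only while $R_\ell$ is being pivoted, and at most once there), whereas you derive both directions from a single persistence-of-zero invariant --- once $R_k^i=0$ with row $i$ paired, it stays $0$ --- proved by your three-case inspection ($\ell'<i$ by the definition of pivot, $\ell'>i$ by \cref{obs:terminate}, $\ell'=i$ by the trigger condition of the while loop, which indeed selects only rows at which the target column currently has a $1$). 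This buys a more symmetric proof that never needs to argue that forward additions cannot occur after pivoting, at the cost of carrying the invariant; note also that the only modification of a column not covered by your case analysis, the compression step in {\tt Main}, only deletes entries and occurs before the column is first reduced, so it cannot recreate a $1$. Two small points to tighten: the claim that each $P[\ell]$ is ``written at most once'' is not literally true, since every invocation of {\tt Reduce}$(j)$, including recursive ones, re-executes $P[\lowp{R_j}]\gets j$, and at that moment the pivot row is paired (to $j$ itself); what you actually need, and what holds, is that the stored value never changes, because $\lowp{R_j}$ is stable after pivoting (subsequent additions into $R_j$ only touch rows above its pivot) and distinct reduced columns have distinct pivots. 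The same pivot-stability fact is what justifies your identification $\ell=\lowp{R_a}$ in the forward case, so it is worth stating it once explicitly.
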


\begin{proof}  
Fix $1\leq k < \ell \leq N$.
$R_k$ is added to $R_{\ell}$ only once when $R_{\ell}$ is being pivoted. $R_{\ell}$ is added to $R_k$ only if $\ell$ is paired to $j$ and $R_k^{j}\neq 0$.
Once $R_k^{j}$ is eliminated from $R_k$, it is not reintroduced, since $j$ is now paired, and therefore $R_{\ell}$ is added to $R_k$ at most once.
\end{proof}

\begin{lemma} \label{lem:eternalbound}
Let $(i,j)$ be a pivot pair. 
Once $R_{j}$ is pivoted, $\sizep{R_{j}}$ is always $\leq 1 + \persbetti{i,j}$.
\end{lemma}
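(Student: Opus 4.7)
The plan is to identify the nonzero entries of $R_j$ after it has been pivoted and to bound their count using $\persbetti{i,j}$. Since the pivot contributes exactly one entry, at row $i$, it suffices to show that at most $\persbetti{i,j}$ further entries sit strictly above row $i$ at any moment after $R_j$ has been pivoted.

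I would first observe that $R_j$ is mutated only inside invocations of {\tt Reduce}$(j)$: every other column operation in \cref{algorithm_lr_red} uses $R_j$ as the source of an addition, not the destination. Hence it is enough to bound $\sizep{R_j}$ immediately after each such invocation completes. Let $\ell$ be the value of the outer \textbf{for} loop index in {\tt Main} at the moment when {\tt Reduce}$(j)$ is called. Since $R_j$ is first pivoted when the loop reaches $j$, and every later call to {\tt Reduce}$(j)$ arises recursively from {\tt Reduce}$(k)$ with $k > j$, we always have $\ell \geq j$. Applying \cref{obs:terminate}, every row $m$ with $R_j^m = 1$ and $m < i$ is then unpaired at $\ell$.

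Next I would verify that every such $m$ is the first coordinate of a pair in $\overline{P}$, that is, that $\sigma_m$ is positive or essential. {\tt Main} strips the negative indices from each $R_k$ before it is first reduced, and every backward addition performed inside a {\tt Reduce}$(k)$ call mixes in a column whose rows above its own pivot have themselves been cleaned of negative indices. A straightforward induction on the outer loop index then rules out any negative row index above the pivot of any processed column. Thus each $m < i$ in $R_j$ has a well-defined death partner $j_m$, with $j_m = N+1$ when $\sigma_m$ is essential.

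Being unpaired at $\ell$ means $j_m > \ell \geq j$, so the rows above the pivot of $R_j$ inject into
\[
\{(m', j') \in \overline{P} \mid m' \leq i,\ j' > j\},
\]
which has cardinality $\persbetti{i,j}$: the pivot pair $(i,j)$ itself violates $j' > j$, and $i$ belongs to no other pair in $\overline{P}$. Combining, $\sizep{R_j} \leq 1 + \persbetti{i,j}$. The main obstacle is the bookkeeping invariant in the second step, ruling out negative rows above pivots throughout execution; once that is in place, the matching between unpaired-at-$\ell$ rows and persistent classes surviving past step $j$ is essentially immediate.
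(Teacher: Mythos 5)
Your argument is essentially the paper's: both rest on \cref{obs:terminate}, identifying the entries of $R_j$ strictly above the pivot with row indices that are unpaired at the current outer step $\ell \geq j$, and hence with pairs of $\overline{P}$ whose death index exceeds $j$, of which there are exactly $\persbetti{i,j}$. Your explicit verification that no negative row index can appear above a pivot (the compress step in {\tt Main} plus an induction over the column operations) is a detail the paper leaves implicit, and it is genuinely needed, since the definition of ``unpaired at $\ell$'' by itself does not exclude negative indices; this part of your write-up is a welcome addition. The one step that is too quick is the reduction ``it is enough to bound $\sizep{R_j}$ immediately after each invocation of {\tt Reduce}$(j)$ completes'': the lemma asserts the bound \emph{always}, and within a single invocation a backward addition can temporarily enlarge $R_j$ (it cancels one paired entry but may insert several new ones), so the intermediate states are not automatically dominated by the state at completion. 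The repair uses exactly the argument you already have: every entry above row $i$ present at any moment was either present when $R_j$ was pivoted or was introduced by an addition performed at some outer step $\ell' \geq j$ while unpaired at $\ell'$, so its death index exceeds $j$, and your injection into the set of $\persbetti{i,j}$ persisting pairs applies to every intermediate state as well --- this is how the paper argues, maintaining the bound addition by addition rather than invocation by invocation.
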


\begin{proof}
By Observation~\ref{obs:terminate}, immediately after $R_j$ is pivoted,  the entries above $i=\lowp{R_j}$ are unpaired at $j$, and hence  $\sizep{R_{j}} \leq 1 + \persbetti{i,j} $.
Subsequently, $R_\ell$ is added to $R_j$ only if $\ell > j$ and $\lowp{R_{\ell}} <i $. 
Moreover, when $R_\ell$ is added to $R_j$, by Observation~\ref{obs:terminate}, the entries above $R_\ell$ are unpaired at $\ell$. 
Hence, $\sizep{R_{j}} \leq 1 + \persbetti{i,j}$ is maintained in subsequent additions into $R_j$.
\end{proof}

As an immediate corollary, we have: 

\begin{corollary} \label{cor:bothwaybounds}
Let $(i,j)\in P$. The number of bitflips of an addition to or from $R_j$ is~$\leq 1+\beta_{i,j}$.
\end{corollary}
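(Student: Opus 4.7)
The plan is to separate ``addition to or from $R_j$'' into the two cases and, in each case, derive the bound from Lemma \ref{lem:eternalbound} together with a simple monotonicity of the persistent Betti numbers.

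\textbf{Additions from $R_j$.} First I would inspect Algorithm \ref{algorithm_lr_red} and note that the only line in which a column is added to another is ``Add \texttt{Reduce}$(P[\ell])$ to $R_j$''. In particular, $R_j$ can serve as a source only after $j$ has been stored in $P$, i.e., only after $R_j$ has been pivoted with pivot pair $(i,j)$. At that moment, Lemma \ref{lem:eternalbound} yields $\sizep{R_j}\leq 1+\persbetti{i,j}$, and since the number of bitflips of such an addition is exactly $\sizep{R_j}$, the bound follows.

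\textbf{Additions to $R_j$.} For the other direction I would invoke Observation \ref{obs:onlyback}: every addition to $R_j$ that takes place once $R_j$ is pivoted is backward, so it has the form $R_x\to R_j$ for some $x>j$. Such an addition is triggered by a paired entry of $R_j$ at a row $\ell$ strictly above $\lowp{R_j}=i$, and the source $R_x=R_{P[\ell]}$ has pivot pair $(\ell,x)$ with $\ell<i$ and $x>j$. The number of bitflips equals $\sizep{R_x}$, which by Lemma \ref{lem:eternalbound} applied to $(\ell,x)$ is at most $1+\persbetti{\ell,x}$. To finish, I would use the monotonicity $\persbetti{\ell,x}\leq\persbetti{i,j}$, which holds because $\ell<i$ and $x>j$ imply that any $(a,b)\in\overline{P}$ with $a\leq\ell$ and $b>x$ automatically satisfies $a\leq i$ and $b>j$.

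The ``from'' side is little more than a one-line inspection of the pseudocode, which explains why the corollary is billed as immediate. The only mild subtlety, and the step I expect to be the main thing to get right, lives in the ``to'' side: one must observe that the source column $R_x$ is itself necessarily pivoted (so Lemma \ref{lem:eternalbound} can be applied to it through its own pivot pair), and then recognise that the strict index inequalities $\ell<i$, $x>j$ produced by the triggering condition translate cleanly into a containment of index rectangles, hence into the needed inequality on persistent Betti numbers.
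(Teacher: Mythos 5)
Your argument is correct and is essentially the expansion the paper intends: the ``from'' half is immediate from \cref{lem:eternalbound} since $R_j$ is only ever used as a source after it is pivoted, and your ``to'' half applies the same lemma to the source column together with the nesting $\ell<i<j<x$ of pivot pairs, giving $\persbetti{\ell,x}\leq\persbetti{i,j}$ (note that the inequalities $x>j$ and $\lowp{R_x}<i$ you need are exactly what the proof of \cref{lem:eternalbound} already records for any column added to $R_j$ after pivoting). Your implicit restriction to additions into $R_j$ that occur after $R_j$ is pivoted (hence backward) is consistent with how the corollary is actually used, since \cref{prop:mainbounds} invokes it only for backward additions and bounds forward bitflips separately in its second term.
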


\begin{proposition}  \label{prop:mainbounds}
The total number of bitflips in \cref{algorithm_lr_red} is bounded by
\[
\sum_{(i,j) \in P} (\betti{i,j} + 1) \cdot \min \{\betti{i,j} + 1, j-i+1\} + \sum_{k = 1}^N (d_k+1)\left(\betti{k}+1\right) \, .
\]
\end{proposition}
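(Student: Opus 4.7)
The plan is to establish two separate upper bounds on the total bitflip count $B$ and then combine them by taking, for each pivot pair, the smaller contribution. Specifically, I will prove both
\[
B \;\leq\; \sum_{(i,j) \in P} (\beta_{i,j}+1)^2 + \sum_{k=1}^N (d_k+1)(\beta_k+1)
\]
and
\[
B \;\leq\; \sum_{(i,j) \in P} (\beta_{i,j}+1)(j-i+1) + \sum_{k=1}^N (d_k+1)(\beta_k+1),
\]
so that taking the minimum per pair summand, via $\min\{a^2, ab\} = a \cdot \min\{a,b\}$ for $a, b \geq 0$, yields the stated inequality.

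Before tackling either bound, I will set up a uniform charging scheme. Since the source $R_s$ of any addition is always pivoted beforehand, a bitflip at row $r$ has either $r = \lowp{R_s}$ (so $(r,s) \in P$) or $r < \lowp{R_s}$ with $r$ unpaired at $s$ (so $r$ is positive or essential and its partner $j_r$ satisfies $j_r > s$). In either case the bitflip can be charged to the unique pair $(r, j_r) \in \overline{P}$, partitioning the count cleanly across pairs.

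Following the outline in the introductory paragraph of this section, the first bound is obtained by splitting bitflips into forward (arising during the pivoting phase of the first \texttt{Reduce}$(t)$ invocation from \texttt{Main}) and backward (arising during retrospective updates). Each column starts with $d_k+1$ boundary entries, and its pivoting phase involves at most $\beta_k+1$ persistent classes whose source sizes are controlled by \cref{lem:eternalbound}; accumulating these gives the additive column term $\sum_k (d_k+1)(\beta_k+1)$. For backward bitflips, \cref{lem:backregular} ensures each is interval for the unique pair $(r, j_r)$ determined by the flipped row $r$; charging each bitflip accordingly and combining \cref{cor:bothwaybounds} (each relevant addition contributes at most $\beta_{i,j}+1$) with \cref{lem:atmostonce} (each addition occurs at most once per direction) then yields the $(\beta_{i,j}+1)^2$ summand per pair. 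The second bound instead splits bitflips into interval and non-interval: non-interval bitflips are charged directly and absorbed into the column term, while interval bitflips for a pair $(i,j) \in \overline{P}$ live at row $i$ in the at most $j - i + 1$ target columns $R_k$ with $k \in \{i+1,\ldots,j\}$, each of which can receive at most $\beta_{i,j}+1$ row-$i$ bitflips by \cref{cor:bothwaybounds}; this gives the $(\beta_{i,j}+1)(j-i+1)$ summand.

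The main obstacle is the careful bookkeeping needed to avoid double-counting across the two decompositions, and to verify rigorously that pivoting bitflips fit into the column term under both analyses. A related subtle step is tightening the number of additions involving $R_j$: \cref{lem:atmostonce} on its own only gives the crude bound $N-1$, so the sharper control in terms of $\beta_{i,j}$ or $j-i+1$ will rely on combining \cref{obs:terminate} and \cref{obs:onlyback} with the persistent Betti estimates from \cref{lem:eternalbound}.
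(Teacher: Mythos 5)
Your combination step is the core gap: from the two global bounds $B \le \sum_{(i,j)\in P}(\betti{i,j}+1)^2 + C$ and $B \le \sum_{(i,j)\in P}(\betti{i,j}+1)(j-i+1) + C$ you can only conclude that $B$ is at most the smaller of the two sums, not the sum of the per-pair minima; the latter requires that, for each pair, both factors bound \emph{one and the same} charged quantity. Your two bounds do not provide this, because they use different decompositions charging different bitflips to different pairs: the first charges backward bitflips via a negative column involved in the addition, while the second charges interval bitflips to the pair $(r,j_r)$ of the flipped row. Concretely, a backward addition into a negative column $R_j$ that flips a row $r\neq i$ is accounted at the pair $(i,j)$ in your first bound but at $(r,j_r)$ in your second; if you elect the $(\betti{i,j}+1)(j-i+1)$ option at $(i,j)$ (which in your scheme counts only row-$i$ flips) and the $(\betti{r,j_r}+1)^2$ option at $(r,j_r)$ (which counts only additions involving column $R_{j_r}$), this bitflip is counted by neither. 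The paper instead fixes a single split, forward versus backward: all forward bitflips are absorbed into $\sum_k(d_k+1)(\betti{k}+1)$ (at most $d_k+1$ paired entries of $R_k$ ever need cancelling, and by Observation~\ref{obs:terminate} each pivoted $R_\ell$ has $\sizep{R_\ell}\le 1+\betti{k}$ when added to $R_k$), and then, for a fixed pivot pair $(i,j)$, the backward additions involving $R_j$ are counted in two ways for the same charge: at most $\betti{i,j}+1$ additions into $R_j$ (by \cref{lem:eternalbound}) and at most $j-i+1$ additions from $R_j$ (targets lie in $\{i+1,\dots,j-1\}$ by \cref{lem:backregular}, each occurring once by \cref{lem:atmostonce}), each addition costing at most $\betti{i,j}+1$ bitflips by \cref{cor:bothwaybounds}. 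That is where the per-pair $\min$ legitimately enters.

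Your row-based charging also breaks the restriction of the first sum to $P$: backward bitflips at rows of essential simplices are interval for pairs $(i,N+1)\in\overline{P}\setminus P$ (for which $\betti{i,j}$ is not even defined), yet they are genuine backward bitflips and are not covered by your column term, which in the first bound accounts only for the pivoting-phase (forward) additions; in the paper they are covered automatically because the charging is by the negative target or source column, whose pair always lies in $P$. In addition, the step ``each target column can receive at most $\betti{i,j}+1$ row-$i$ bitflips by \cref{cor:bothwaybounds}'' misreads that corollary: it bounds the bitflips of a \emph{single} addition to or from the column $R_j$ of the pair $(i,j)$, and says nothing about how often a fixed row can be flipped inside some other column; the factor $j-i+1$ in the proposition arises as a count of additions \emph{from} $R_j$, not of row-$i$ flips per target. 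A smaller inaccuracy: the size control you need for the forward term is $\sizep{R_\ell}\le 1+\betti{k}$ from Observation~\ref{obs:terminate}, not \cref{lem:eternalbound}, which gives $1+\betti{i,\ell}$ for the source's own pair and is not the factor appearing in $\sum_k(d_k+1)(\betti{k}+1)$.
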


\begin{proof}
The first term is obtained by bounding the backward bitflips in two fashions: \emph{into} and \emph{from} $R_j$. 
Fix a pivot pair $(i,j)$. 
By \cref{cor:bothwaybounds} any additions involving $R_j$ has at most $1+\beta_{i,j}$ bitflips, providing the first factor. 
We now count how many columns are added to $R_j$, how many columns $R_j$ is added to, and take the minimum. 
Backward additions into $R_j$ are executed to zero out a formerly unpaired entry that is now paired, and at most $\persbetti{i,j}+1$ entries from $R_j$ need to be zeroed out.
By \cref{lem:backregular}, $R_j$ is added only to columns $R_{\ell}$ for $i+1\leq \ell \leq j-1$, and by \cref{lem:atmostonce} the backward additions happen at most once. 

The second term bounds forward bitflips.
Fix $\ell < k$. 
By Observation~\ref{obs:terminate}, before $R_\ell$ is added to $R_k$, $\sizep{R_\ell} \leq 1+\betti{k}$. 
The paired entries in $R_k$ before $R_k$ is pivoted is bounded by $d_k+1$.
Hence, $d_k+1$ is the maximum number of columns that need to be added to $R_k$ to zero out, and each is added only once (\cref{lem:atmostonce}).
So, the total number of forward bitflips in $R_k$ is bounded by $(d_k+1)(\betti{k}+1)$.
\end{proof}

\begin{proposition}  \label{bigtwoth}
The total number of bitflips required to reduce $D$ in \cref{algorithm_lr_red} is bounded by 
\[
 \sum\limits_{  {(i,j) \in \overline{P}}} ( j - i)^2 \,\, + \,\, \sum_{k=1}^{N}(d_k+1) \, .
\]
\end{proposition}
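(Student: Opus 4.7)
The plan is to partition all bitflips into \emph{interval} and \emph{non-interval} ones, and bound each class separately. Given a bitflip of $R_k^r$ caused by adding $R_\ell$ to $R_k$, I observe that \texttt{Reduce}($\ell$) has been invoked just before the addition, so by \cref{obs:terminate} the entries of $R_\ell$ above $\lowp{R_\ell}$ are unpaired at that step, while $\lowp{R_\ell}$ itself is paired precisely with $\ell$. This suggests splitting into Case~A, where $r = \lowp{R_\ell}$, and Case~B, where $r < \lowp{R_\ell}$ is an entry of $R_\ell$ that is unpaired at the moment of the addition.

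In Case~B, since $P[r]$ is not yet set, the partner $j_r$ in the pair $(r,j_r) \in \overline{P}$ (with $j_r = N+1$ if $r$ is essential) has not yet been processed by \texttt{Main}, so $j_r$ is at least the top-level column currently being reduced. Both $k$ and $\ell$ lie in the recursive chain beneath that top-level column, so $k, \ell \leq j_r$, and together with $k, \ell > r$ (column entries sit strictly above the column index) this yields $k, \ell \in \{r+1, \ldots, j_r\}$, i.e., the bitflip is interval. In Case~A, $j_r = \ell$ by definition, so the bitflip is interval iff $k < \ell$ (backward addition) and non-interval iff $k > \ell$ (forward addition). Consequently, every non-interval bitflip corresponds to a forward addition and occurs at the pivot of the added column, with exactly one such bitflip per forward addition.

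To bound the non-interval bitflips, I would show that forward additions into $R_k$ occur only during the first (top-level) invocation of \texttt{Reduce}($k$), and that each such iteration strictly decreases the number of paired entries of $R_k$: the entry eliminated is paired, while by \cref{obs:terminate} the new entries introduced above the pivot of the added column are unpaired. Since $R_k$ initially has at most $d_k + 1$ entries after compression, it has at most $d_k + 1$ paired entries to begin with, giving at most $d_k+1$ forward additions and thus non-interval bitflips per column, summing to $\sum_k (d_k+1)$. For the interval bound, fix $(i,j) \in \overline{P}$: by \cref{lem:atmostonce}, each ordered pair of distinct columns $(k,\ell)$ with $k, \ell \in \{i+1, \ldots, j\}$ induces at most one addition contributing at most one bitflip at row $i$, so the contribution of $(i,j)$ is at most $(j-i)^2$. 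Adding the two bounds proves the proposition. The main subtlety is the pairing-status argument in Case~B, which requires careful bookkeeping across the recursive call stack to ensure that the unpairedness of $r$ really forces $j_r$ to dominate both $k$ and $\ell$.
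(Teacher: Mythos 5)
Your proposal is correct and follows essentially the same route as the paper's proof: the same interval/non-interval decomposition, the same use of \cref{lem:atmostonce} to bound interval bitflips at row $i$ by the number of ordered column pairs in $\{i+1,\dots,j\}$, and the same identification of non-interval bitflips with the pivot row of forward additions, bounded by $d_k+1$ per column via compression and \cref{obs:terminate}. The only difference is presentational: your Case A/B analysis inlines the content of \cref{lem:backregular} and \cref{obs:onlyback} (which the paper simply cites) rather than invoking them as separate statements.
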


\begin{proof} 
The two addends are bound, respectively, by the interval and non-interval bitflips.

Let $(i,j) \in \overline{P}$. 
For $i < k < \ell \leq j$, by \cref{lem:atmostonce}, the elements $R_{k}^i$ and $R_{\ell}^i$ are added to each other at most once. 
So, the total number of forward and backward interval bitflips in row $i$ is bounded by $(j-i)^2/2$ each, and the first term follows.

By \cref{lem:backregular}, all non-interval bitflips are forward bitflips.
By Observation~\ref{obs:terminate}, if $R_k$ is added to $R_\ell$ for $k<\ell$, the entries above $\lowp{R_k}$ are not paired until $\ell$, and  lead to interval bitflips in $R_\ell$. 
As a result of adding $R_k$, the only non-interval bitflips in $R_\ell$ occur in the row index of $\lowp{R_k}$.
Since $D$ is a boundary matrix, at the beginning every column $k$ has at most $d_k+1$ unpaired entries which need to be zeroed out, proving the claim.
\end{proof}

If we restrict the matrix $D$ to the $p$-simplices of the complex, we obtain the $p$-dimensional boundary matrix $D^{(p)}$.
We then have a finer analysis of \cref{bigtwoth}. 
The bound for the number of bitflips required to reduce $D^{(p)}$ is
\[ 
\sum\limits_{  {(i,j) \in \overline{P}}, \,\, d_j = p} ( j - i)^2 \,\, + \,\, N(p+1) \, .
\]

Using Observation~\ref{obs:terminate} and \cref{lem:backregular}, the maximum number of entries in $R_i$ for $i=1,\dots,N$ during the course of the algorithm  is $d_i + \betti{i} + 1$. 
Let $\beta_{\max} = \max_{i=1,\dots,N} \betti{i}$. 
Then the peak memory consumption for \cref{algorithm_lr_red} is bounded by $O(N (\max_i d_i+\beta_{\max}))$.

\section{Differentiating examples}
\label{sec:differentiating}
Our experiments have shown that, in practice, retrospective and swap reductions have the potential to run faster than twist.
However, there exist constructions for which either of the three mentioned algorithms performs asymptotically better than the other two. 
Specifically:

\begin{proposition}\label{prop:differentiate}
Let $A\in\{\text{twist},\text{swap},\text{retrospective}\}$. Then there exists an infinite family of filtered
simplicial complexes with increasing size $n$, such that the number of bitflips for matrix reduction using
$A$ is bounded by $O(n)$, where the number of bitflips for the other two algorithms is $\Omega(n^2)$.
\end{proposition}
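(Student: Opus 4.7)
The plan is to construct, for each choice of $A\in\{\text{twist},\text{swap},\text{retrospective}\}$, an infinite family $\{\complex_n\}$ of filtered (cubical or simplicial) complexes of size $\Theta(n)$ on which $A$ performs $O(n)$ bitflips while the two other algorithms perform $\Omega(n^2)$. Each example is built from a small repeating gadget designed to trigger exactly the behavior in which $A$ differs from the other two; stacking $\Theta(n)$ copies of the gadget turns the per-gadget gap into the claimed quadratic total.

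The design idea for two of the three cases is to exploit one simple pattern: several columns that eventually end up sharing a pivot $p$ compete for it, and the three algorithms decide differently which one keeps $p$. For the case where \textbf{swap wins}, I arrange $\Theta(n)$ columns sharing $p$ so that the leftmost is dense (of size $\Theta(n)$) and the others are sparse; twist keeps adding the dense column to each of the sparse ones, paying $\Theta(n)$ bitflips per addition, whereas swap moves the dense column to the right on first contact and thereafter only adds sparse columns, costing $O(1)$ per addition. The retrospective algorithm, having no size comparison at the moment it chooses a pivot-matching column, behaves like twist here. For the case where \textbf{twist wins}, I reverse the roles (leftmost column sparse, others dense); twist is optimal since it always adds a sparse column into a dense one, while swap is tricked into swapping and then forced to re-add the newly densified column many times, and retrospective's recursive backward reductions propagate freshly found pivots into the earlier dense columns, creating $\Omega(n^2)$ bitflips.

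For the case where \textbf{retrospective wins}, I would build a gadget in which a long-lived column $R_j$ carries an entry $R_j^{i'}$ that later becomes paired via a pair $(i',j')$ with $i' < \lowp{R_j}$. Retrospective eliminates that entry by a single right-to-left addition the first time $R_j$ is used after $(i',j')$ is discovered, keeping $R_j$ sparse for each of its $\Theta(n)$ subsequent uses (\cref{lem:eternalbound} gives the exact size bound). Twist and swap, restricted to left-to-right additions, never clean up $R_j^{i'}$ and therefore spend $\Omega(n)$ bitflips per use of $R_j$. Iterating this ``stale entry'' gadget $\Theta(n)$ times yields the $\Omega(n^2)$ lower bound for the two non-retrospective algorithms while retrospective stays linear.

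The main obstacle is the realizability constraint: the matrices sketched above must actually be boundary matrices of simplexwise (or cellwise) filtered complexes, so neither the pivot pairs nor the initial sparsity pattern are freely chosen, and each column starts with at most $\dim(\complex)+1$ entries. I would therefore realize each gadget explicitly from elementary hole-creation and hole-filling pieces (for instance, a triangulated annulus whose filtration order controls precisely which row indices appear in each boundary column, together with a cone filling it at a prescribed step). The bulk of the technical effort will be verifying that (i) the clearing and compression optimizations do not short-circuit the bad behavior of the disfavored algorithms, (ii) the predicted additions really occur in the claimed order under the tie-breaking rules used, and (iii) the dense-column growth in cases (a) and (b) is not prevented by column sizes saturating too early. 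Once these bookkeeping checks go through on a single gadget, summing over $\Theta(n)$ independent copies gives \cref{prop:differentiate}.
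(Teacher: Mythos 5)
Your overall strategy (one explicit family per choice of $A$, realized as filtered complexes) is the same kind of argument the paper makes with its four wheel-and-fan constructions $K_1$--$K_4$, but two of your three cases have genuine gaps. First, in your ``swap wins'' gadget you assert that retrospective ``behaves like twist'' because it performs no size comparison; this is not how retrospective behaves. Before a column is added, retrospective removes all of its entries that have become paired (and compression has already removed negative entries), so on the natural realization of your gadget --- a column densified by a staircase, since a boundary column can never \emph{start} with $\Theta(n)$ entries --- retrospective sparsifies the dense column down to essentially a unit vector and runs in linear time. This is exactly the paper's $K_1$, where twist and swap are quadratic but retrospective is linear. To make retrospective quadratic you must control the \emph{pairing status} of the accumulated entries, e.g.\ force them to be positive and still unpaired at every addition (the paper achieves this in $K_2$ by coning off the wheel with an apex so the tire edges become positive); your sketch is silent on this, and it is where the real work lies.

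Second, your ``twist wins'' gadget is internally inconsistent with the swap rule of \cref{algorithm_swap_red}: a swap is triggered only when the \emph{current} column is sparser than the earlier reduced column with the same pivot ($\sizep{R_j}<\sizep{R_{j'}}$), so with ``leftmost column sparse, others dense'' no swap ever occurs and swap performs exactly the same bitflips as twist --- no separation. Fooling swap requires a situation where the locally sparser column is globally the worse pivot representative (the paper's $K_4$ uses a $3$-entry versus $4$-entry collision plus an extra row block so that the post-swap representative forces every later column through the whole staircase), and making retrospective simultaneously quadratic needs the $K_2$-type densification mechanism on top; note also that \cref{lem:atmostonce} caps each backward addition pair at one occurrence, so ``propagating pivots into earlier dense columns'' cannot by itself account for $\Omega(n^2)$ bitflips. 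Your retrospective-wins case is sound in spirit (it is the paper's $K_1$), but until the other two constructions are repaired --- either by a $K_2$/$K_4$-style design or, as the paper implicitly does, by combining two complexes per case (e.g.\ a disjoint union of a ``swap-beats-twist'' and a ``twist/swap-beat-retrospective'' family) --- the proposition is not proved. The uniform ``stack $\Theta(n)$ constant-size gadgets'' framing also cannot give a quadratic lower bound; the quadratic cost must come from a single $\Theta(n)$-size structure, as in your later per-case descriptions and in the paper.
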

We prove this statement by four constructions of filtered simplicial complexes with the following properties:
\begin{itemize}
\itemsep0em 
\item Complex $K_1$ causes $O(n)$ bitflips for retrospective, and $\Omega(n^2)$ bitflips for twist and swap, 
\item Complex $K_2$ causes $O(n)$ bitflips for twist and swap, and $\Omega(n^2)$ bitflips for retrospective.
\item Complex $K_3$ causes $O(n)$ bitflips for swap, and $\Omega(n^2)$ bitflips for twist.
\item Complex $K_4$ causes $O(n)$ bitflips for twist, and $\Omega(n^2)$ bitflips for swap.
\end{itemize}
The statement follows directly from these $4$ constructions.

\begin{figure}[h!]
\centering
\begin{subfigure}[h]{.5\linewidth}
\begin{tikzpicture}[line cap=round,line join=round,>=triangle 45,x=1cm,y=1cm]
\clip(-3.0106089776135723,-1.6549568379899828) rectangle (4.611997426436136,2.7932220949215076);
\begin{scriptsize}
\begin{pgfonlayer}{foreground}
\node[dot=3pt] (v0) at (0,0) {};
\node[dot=3pt] (v1) at (0.7865072033647845,2.37305845251551) {};
\node[dot=3pt] (v1) at (0.7865072033647845,2.37305845251551) {};
\node[dot=3pt] (v2) at (1.4945672579355689,2.0064266280775427) {};
\node[dot=3pt] (v3) at (2.0485511735311537,1.4325014320868856) {};
\node[dot=3pt] (v4) at (-2.278994628732195,1.0277890074745213) {};
\node[dot=3pt] (v5) at (-1.8399580316361015,1.6924994658249684) {};
\node[dot=3pt] (v6) at (-0.8213486526906792,2.688734670322916) {};
\node[dot=3pt] (v7) at (-1.0054455436501044,2.638978753847396) {};
\node[dot=3pt] (v8) at (-1.7169551492500446,2.2309802387481317) {};
\end{pgfonlayer}
\draw [line width=1pt] (v0) -- (v1);
\draw [line width=1pt] (v0) -- (v2);
\draw [line width=1pt] (v0) -- (v3);
\draw [line width=1pt] (v2) -- (v1);
\draw [line width=1pt] (v2) -- (v3);
\draw[color=black] (1.2,2.3) node {$e_2$};
\draw[color=black] (1.9,1.8) node {$e_3$};
\draw[color=black] (0.3,1.55) node {$e_1$};
\draw[color=black] (1.3,1.2) node {$e_{n+2}$};
\draw[color=black] (1.4,0.6) node {$e_{n+3}$};
\draw[color=black] (0.9,1.7) node {$t_n$};
\draw[color=black] (1.55,1.45) node {$t_{n+1}$};
\draw[color=black] (-2.4,1.4) node {$e_{n+1}$};
\draw[color=black] (-1.4,1) node {$t_1$};
\begin{scope}[on background layer] 
\fill [gray!20] (v0.center) -- (v1.center) -- (v2.center) -- cycle;
\fill [gray!20] (v0.center) -- (v2.center) -- (v3.center) -- cycle;
\fill [gray!20] (v0.center) -- (v4.center) -- (v5.center) -- cycle;
\end{scope}
\draw [line width=1pt] (v0) -- (v4);
\draw [line width=1pt] (v0) -- (v5);
\draw [line width=1pt] (v4) -- (v5); 
\fill [gray!50] (v0.center) -- (v5.center) -- (v6.center) -- cycle;
\draw [line width=1pt] (v0) -- (v6);
\draw [line width=1pt] (v5) -- (v6);
\fill [gray!50] (v0.center) -- (v5.center) -- (v7.center) -- cycle;
\draw [line width=1pt] (v0) -- (v7);
\draw [line width=1pt] (v5) -- (v7);
\fill [gray!50] (v0.center) -- (v5.center) -- (v8.center) -- cycle;
\draw [line width=1pt] (v0) -- (v8);
\draw [line width=1pt] (v5) -- (v8);
\draw [fill=black] (0.17626925405727717,0.058890499875125395) circle (.5pt);
\draw [fill=black] (0.18645598603640848,-0.005140386850842488) circle (.5pt);
\draw [fill=black] (0.1733587592060968,-0.06480553130003984) circle (.5pt);
\draw [fill=black] (0.13697757356634216,-0.12156018089805683) circle (.5pt);
\draw [fill=black] (-0.18608735491467907,0.019598819384190558) circle (.5pt);
\draw [fill=black] (-0.17735587036113795,-0.05752929417208894) circle (.5pt);
\draw [fill=black] (-0.13660894244461277,-0.12883641802600773) circle (.5pt);
\draw [fill=black] (-0.06821231344187403,-0.17540433564489347) circle (.5pt);
\draw [fill=black] (0.007460552688815624,-0.1870463150496149) circle (.5pt);
\draw [fill=black] (0.08458866624509546,-0.16521760366576219) circle (.5pt);
\draw [fill=black] (-1.5,2.4) circle (.5pt);
\draw [fill=black] (-1.4,2.45) circle (.5pt);
\draw [fill=black] (-1.3,2.5) circle (.5pt);
\end{scriptsize}
\end{tikzpicture}  
\caption{Complex $K_1$.}
\label{fig:k1}
\end{subfigure}
\begin{subfigure}[h]{.28\linewidth}
\begin{tikzpicture}[line cap=round,line join=round,>=triangle 45,x=1cm,y=1cm]
\begin{scriptsize}
\begin{pgfonlayer}{foreground}
\node[dot=3pt] (v0) at (0,0) {};
\node[dot=3pt] (v1) at (0.7865072033647845,2.37305845251551) {};
\node[dot=3pt] (v1) at (0.7865072033647845,2.37305845251551) {};
\node[dot=3pt] (v2) at (1.4945672579355689,2.0064266280775427) {};
\node[dot=3pt] (v3) at (2.0485511735311537,1.4325014320868856) {};
\node[dot=3pt] (v4) at (-2.278994628732195,1.0277890074745213) {};
\node[dot=3pt] (v5) at (-1.8399580316361015,1.6924994658249684) {};
\node[dot=3pt] (v6) at (-0.8213486526906792,2.688734670322916) {};
\node[dot=3pt] (v7) at (-1.0054455436501044,2.638978753847396) {};
\node[dot=3pt] (v8) at (-1.7169551492500446,2.2309802387481317) {};
\node[dot=3pt] (v9) at (1.18,-1.9) {};
\end{pgfonlayer}
\draw [line width=1pt] (v0) -- (v1);
\draw [line width=1pt] (v0) -- (v2);
\draw [line width=1pt] (v0) -- (v3);
\draw [line width=1pt] (v2) -- (v1);
\draw [line width=1pt] (v2) -- (v3);
\begin{scope}[on background layer] 
\fill [gray!20] (v0.center) -- (v1.center) -- (v2.center) -- cycle;
\fill [gray!20] (v0.center) -- (v2.center) -- (v3.center) -- cycle;
\fill [gray!20] (v0.center) -- (v4.center) -- (v5.center) -- cycle;
\end{scope}
\draw [line width=1pt] (v0) -- (v4);
\draw [line width=1pt] (v0) -- (v5);
\draw [line width=1pt] (v4) -- (v5); 
\fill [gray!50] (v0.center) -- (v5.center) -- (v6.center) -- cycle;
\draw [line width=1pt] (v0) -- (v6);
\draw [line width=1pt] (v5) -- (v6);
\fill [gray!50] (v0.center) -- (v5.center) -- (v7.center) -- cycle;
\draw [line width=1pt] (v0) -- (v7);
\draw [line width=1pt] (v5) -- (v7);
\fill [gray!50] (v0.center) -- (v5.center) -- (v8.center) -- cycle;
\draw [line width=1pt] (v0) -- (v8);
\draw [line width=1pt] (v5) -- (v8);
\draw [fill=black] (0.17626925405727717,0.058890499875125395) circle (.5pt);
\draw [fill=black] (0.18645598603640848,-0.005140386850842488) circle (.5pt);
\draw [fill=black] (0.1733587592060968,-0.06480553130003984) circle (.5pt);
\draw [fill=black] (0.13697757356634216,-0.12156018089805683) circle (.5pt);
\draw [fill=black] (-0.18608735491467907,0.019598819384190558) circle (.5pt);
\draw [fill=black] (-0.17735587036113795,-0.05752929417208894) circle (.5pt);
\draw [fill=black] (-0.13660894244461277,-0.12883641802600773) circle (.5pt);
\draw [fill=black] (-0.06821231344187403,-0.17540433564489347) circle (.5pt);
\draw [fill=black] (0.007460552688815624,-0.1870463150496149) circle (.5pt);
\draw [fill=black] (0.08458866624509546,-0.16521760366576219) circle (.5pt);
\draw [fill=black] (-1.5,2.4) circle (.5pt);
\draw [fill=black] (-1.4,2.45) circle (.5pt);
\draw [fill=black] (-1.3,2.5) circle (.5pt);
\draw [line width=1pt] (v0) -- (v9);
\draw [line width=1pt] (v1) -- (v9);
\draw [line width=1pt] (v2) -- (v9);
\draw [line width=1pt] (v3) -- (v9);
\draw [line width=1pt] (v4) -- (v9);
\draw [line width=1pt] (v5) -- (v9);
\end{scriptsize}
\end{tikzpicture} 
\caption{Complex $K_2$.}
\label{fig_k2}
\end{subfigure}
\caption{Depiction of $K_1$ and $K_2$.}
\label{fig:k1k2}
\end{figure}

Each construction consists of two parts: the complexity study of the reduction of a specific boundary matrix, and the existence of a simplicial complex realizing said boundary matrix. 
\medskip

\textbf{Existence of simplicial complex $K_1$.}
The complex is depicted in \cref{fig:k1}. 
We start with a structure that we call an (open) wheel:
it consists of $n$ triangles incident to a \define{wheel center} vertex
such that subsequent triangles share an edge, but the first and last triangles do not share an edge. 
The edges incident with the wheel center are called \define{spoke edges}. 
The two that are part of only one triangle are called the \define{initial} and the \define{final} spoke edge, respectively.
The edges not-incident with the wheel center are called \define{tire edges}. 
We enumerate these edges as follows: first the initial spoke edge, then all the tire edges along the wheel starting with the one adjacent to the initial spoke, and finally all the remaining spoke edges following the wheel starting with the spoke forming a triangle with the initial spoke edge. 
We then sort the triangles following the wheel starting from the final spoke edge (that is, in the opposite direction w.r.t. the spoke edges).  
Note that by design, the tire edges are merging components in the filtration and are, therefore negative.

Next, we attach a \define{fan} of size $n$ to the final spoke edge. 
This means we introduce $n$ additional vertices $v_1,\ldots,v_n$ and form $n$ \define{fan triangles}, each joining one $v_i$ with the final spoke edge. 
The two edges of a fan triangle, not being the final spoke edge, are called \define{fan edges}. 
We call the fan edge incident to the center \define{center fan edge}, and the other one \define{outer fan edge}. 
We sort the edges of the filtration by letting the center fan edges come after the tire edges, followed by the outer fan edges, followed by the final spoke edge. 
Finally, the fan triangles come after the wheel triangles.

\begin{figure}[h!]
\centering
\scriptsize{
$
\begin{array}{c}
    \text{wheel triangles \ fan triangles \ \ }\\
\left(
\begin{array}{cccc|cccc}
\ast & \ast & \ast & \ast & \ast & \ast & \ast & \ast \\
\vdots & \vdots & \vdots & \vdots & \vdots & \vdots & \vdots & \vdots \\
\ast & \ast & \ast & \ast & \ast & \ast & \ast & \ast \\
0 & 0 & 0 & 1 & 0 & 0 & 0 & 0 \\
0 & 0 & 1 & 1 & 0 & 0 & 0 & 0 \\
0 & 1 & 1 & 0 & 0 & 0 & 0 & 0 \\
1 & 1 & 0 & 0 & 0 & 0 & 0 & 0 \\
1 & 0 & 0 & 0 & 1 & 1 & 1 & 1 \\
\end{array}
\right)
\end{array}
$
}
\caption{(Sub)matrix of $K_1$}
\label{fig:boundary_K1}
\end{figure}

\textbf{Reduction complexity of $K_1$.}
This construction yields a filtered simplicial complex
whose boundary matrix contains the matrix of \cref{fig:boundary_K1} as submatrix. 
Note that \cref{fig:boundary_K1} is also a submatrix of the construction in~\cite{morozov2005worst-case}.
The first half of the matrix contains a ``staircase'' of columns with decrementing pivots. 
The staircase is of size $4$ in \cref{fig:boundary_K1} but can easily be extended to an arbitrary $n$ in the obvious way. 
The second half consists of columns that all have the same pivot, equal to the lowest step of the staircase. 
When reducing the matrix (using the standard or twist algorithm), the reduction of each column in the second half causes the algorithm to add each column of the first half to it in order. 
In total, this causes a quadratic number of column operations.
The swap algorithm has the same complexity since here no swapping happens.

The retrospective algorithm, however, sparsifies the first column before it gets added to the second half. 
This simplification requires linear time (by one iteration through the staircase) and results in a unit vector. 
In all additions to the second half, the cost is therefore constant. 
This leads to linear complexity.
It is important to note that the third nonzero entry for the left-hand-side columns comes from a tire edge, which is negative. 
Hence, these indices will be removed when reducing wheel triangles. 
Therefore, ignoring these indices, we observe that indeed the first column turns into a unit vector. 
Reducing the fan triangles with the retrospective algorithm thus results in removing the final spoke edge, which makes the outer fan edge the pivot, and the reduction of the column stops after one bitflip.

It remains to be argued that the reduction of the edges in the filtration is linear as well for the retrospective reduction.
However, this is simple to see, putting an appropriate order of the vertices in the complex. 
We omit the details.
\medskip

\textbf{Existence of simplicial complex $K_2$.}
For $K_2$, we extend the complex $K_1$ by adding one more vertex, called the \define{apex}, and connecting it with every vertex of the wheel via an edge (see \cref{fig_k2}).
We call these edges \define{apex edges}. 
We sort the edges of $K_2$ in the following order: apex edges, center fan edges, initial spoke edge, tire edges, inner spoke edges, outer fan edges, and final spoke edge. 
The triangles remain in the same order as in $K_1$.

The two major differences to the situation of $K_1$ are: the tire edges are now positive edges, so the compression will not remove these entries anymore.
Moreover, shifting the outer fan edges later in the filtration creates a block of $n$ edges between the inner spoke edges and the final spoke edge.
The filtration boundary matrix therefore looks as depicted in \cref{fig:boundary_K2}, where the just mentioned block is given between the two horizontal lines.

\textbf{Reduction complexity of $K_2$.} 
We see now that twist and swap reduction only cause one column addition for every column on the right because the entries in the newly inserted block prevent the algorithm from doing further reductions. 
Importantly, each column operation
only causes a constant number of bitflips, so that the complexity is linear in the end.
Again, it can easily be argued that the reduction of the edges for the twist and swap algorithm requires only linear time.

For the retrospective reduction, the addition of the first column to the second half causes
a ``sparsification'', as in the previous example. 
However, in this case, this sparsification actually turns the first column into a column with $n$ nonzero entries because it
collects all indices of tire edges while iterating through the staircase. 
Since we then add this column $n$ times (once to every column on the right) and each addition causes $n$ bitflips, we get quadratic complexity.

\begin{figure}[h!]
\centering
\scriptsize{
$
\begin{array}{r}
\text{\ }\\
\begin{array}{r}
\\
\\
\\
\text{apex edges, ...} \\
\text{inner spoke edges}\\
\\
\\
\\
\\
\text{outer fan}\\
\text{edges}\\
\\
\text{final spoke} \\
\end{array}
\end{array}
$
\hspace{-.7cm}
$
\begin{array}{c}
    \text{wheel triangles \ fan triangles \ \ }\\
\left(
\begin{array}{cccc|cccc}
0 & 0 & 0 & 1 & 0 & 0 & 0 & 1 \\
0 & 0 & 1 & 0 & 0 & 0 & 1 & 0 \\
0 & 1 & 0 & 0 & 0 & 1 & 0 & 0 \\
1 & 0 & 0 & 0 & 1 & 0 & 0 & 0 \\
0 & 0 & 0 & 1 & 0 & 0 & 0 & 0 \\
0 & 0 & 1 & 1 & 0 & 0 & 0 & 0 \\
0 & 1 & 1 & 0 & 0 & 0 & 0 & 0 \\
1 & 1 & 0 & 0 & 0 & 0 & 0 & 0 \\
\hline
0 & 0 & 0 & 0 & 1 & 0 & 0 & 0 \\
0 & 0 & 0 & 0 & 0 & 1 & 0 & 0 \\
0 & 0 & 0 & 0 & 0 & 0 & 1 & 0 \\
0 & 0 & 0 & 0 & 0 & 0 & 0 & 1 \\
\hline
1 & 0 & 0 & 0 & 1 & 1 & 1 & 1 \\
\end{array}
\right)
\end{array}
$
}
\caption{Matrix for $K_2$ (with $n=4$)}
\label{fig:boundary_K2}
\end{figure}

\textbf{Reduction complexity of $K_3$.} 
For $K_3$, we build up a filtration boundary matrix as depicted in \cref{fig:boundary_K3}. 
For reference, we call the horizontal blocks in figure block 1 to block 5, starting from the bottom.

The twist algorithm applied to this boundary matrix reduces the fifth column by adding all columns on the left to it. 
This results in a fill-in of row indices in block 4, and the pivot being the unique row index of block 2. 
Since all columns on the right have the same pivot, 
the reduced fifth column with $n$ entries is added to every column to the right, resulting in quadratic complexity.

In the swap reduction, the fifth column is reduced in the same way.
However, when added to the first column to the right, a swap happens so that in the reduction of the subsequent columns, the sixth column is used. 
We can observe by the block structure in block 3 that
all further columns are reduced after one column addition.
Also, column six has only $3$ nonzero entries, so the total
complexity is linear. 

\textbf{Existence of simplicial complex $K_3$.}
To realize the depicted matrix as the boundary matrix of a simplicial complex, we again construct an open wheel. 
We attach one triangle to the final spoke edge, joining it with a new vertex (represented by the middle column of the matrix). 
Then, on the edge of that triangle not incident to the wheel center, we attach a fan of $n$ triangles. 
It is easily possible to sort the edges of this complex in a way that we get the depicted block structure.
\medskip

\begin{figure}[h]
\centering
\begin{minipage}{0.1\textwidth}
\scriptsize{
\begin{tabular}{cc}
    & \\
    \multirow{3}{*}{block 5} &  \\
    & \\
    & \\
    \multirow{4}{*}{block 4} &  \\
    & \\
    & \\
    & \\
    \multirow{4}{*}{block 3} &  \\
    & \\
    & \\ 
    & \\
    block 2 & \\
    \multirow{4}{*}{block 1} &  \\
    & \\
    & \\
    & 
\end{tabular}
}
\end{minipage}
\begin{minipage}{0.5\textwidth}
\scriptsize{
$
\left(
\begin{array}{cccc|c|cccc}
\ast & \ast & \ast & \ast & \ast & \ast & \ast & \ast & \ast \\
\vdots & \vdots & \vdots & \vdots & \vdots & \vdots & \vdots & \vdots & \vdots \\
\ast & \ast & \ast & \ast & \ast & \ast & \ast & \ast & \ast \\
\hline
0 & 0 & 0 & 1 & 0 & 0 & 0 & 0 & 0 \\
0 & 0 & 1 & 0 & 0 & 0 & 0 & 0 & 0 \\
0 & 1 & 0 & 0 & 0 & 0 & 0 & 0 & 0 \\
1 & 0 & 0 & 0 & 0 & 0 & 0 & 0 & 0 \\
\hline
0 & 0 & 0 & 0 & 0 & 1 & 0 & 0 & 0 \\
0 & 0 & 0 & 0 & 0 & 0 & 1 & 0 & 0 \\
0 & 0 & 0 & 0 & 0 & 0 & 0 & 1 & 0 \\
0 & 0 & 0 & 0 & 0 & 0 & 0 & 0 & 1 \\
\hline
0 & 0 & 0 & 0 & 1 & 1 & 1 & 1 & 1 \\
\hline
0 & 0 & 1 & 1 & 0 & 0 & 0 & 0 & 0 \\
0 & 1 & 1 & 0 & 0 & 0 & 0 & 0 & 0 \\
1 & 1 & 0 & 0 & 0 & 0 & 0 & 0 & 0 \\
1 & 0 & 0 & 0 & 1 & 0 & 0 & 0 & 0 \\
\end{array}
\right)
$
}
\end{minipage}
\caption{Matrix for $K_3$ (with $n=4$)}
\label{fig:boundary_K3}
\end{figure}

\textbf{Reduction complexity of $K_4$.}
For $K_4$, we build a boundary matrix as in \cref{fig:boundary_K4}.
For general $n$, the matrix has $n$ columns on the left (group A), $n$ columns on the right (group C), and exactly $3$ columns in the middle (group B).

For the twist reduction, the 2-nd column in group B gets reduced with one column addition, resulting in a column with $4$ nonzero entries.
The 3-rd column in group B then has the same pivot as the just-reduced column in the middle, and the reduction of the 3-rd column requires the addition of all columns in group A. 
Still, this process only requires a linear amount of bitflips.
All columns in group C get added from the 2-nd column in group B, and because of their entries in the 3-rd row block, the reduction stops after one addition. 
In total, the twist reduction needs only linear time.

In the swap reduction, the difference is that at the beginning of the reduction of the 3-rd column in group B, a swap happens (as the 3rd column has only $3$ nonzero entries, the 2-nd column has $4$ entries).
That means that the 3-rd column in the middle gets added to all columns in group C. 
Consequently, for the reduction of every column on the right, the reduction adds all the groups on the left to it, resulting in a quadratic number of column additions.

\textbf{Existence of simplicial complex $K_4$.}
The construction of a complex $K_4$ that realizes this boundary matrix can be done as follows:
Similarly to $K_3$, we start with an open wheel, attach one new triangle (that is the 3-rd column in group B), and put a fan of $n$ triangles at its outer edge (these are the columns forming group C).
To one of these triangles (that is, the 1-st column in group B) we attach another triangle (the 2-nd column in group B).
The edge that is shared among the last described triangles is the last row in the matrix.
The edges can easily be sorted to yield the matrix of \cref{fig:boundary_K4}.

\begin{figure}[h]
\centering
\scriptsize{
$
\begin{array}{c}
    \text{group A \qquad group B \qquad group C }\\
\left(
\begin{array}{cccc|ccc|cccc}
\ast & \ast & \ast & \ast & \ast & \ast & \ast & \ast & \ast & \ast & \ast \\
\vdots & \vdots & \vdots & \vdots & \vdots & \vdots & \vdots & \vdots & \vdots & \vdots & \vdots \\
\ast & \ast & \ast & \ast & \ast & \ast & \ast & \ast & \ast & \ast & \ast \\
\hline
0 & 0 & 0 & 0 & 0 & 0 & 1 & 0 & 0 & 0 & 0 \\
0 & 0 & 0 & 0 & 0 & 1 & 0 & 0 & 0 & 0 & 0 \\
0 & 0 & 0 & 0 & 0 & 1 & 0 & 0 & 0 & 0 & 0 \\
0 & 0 & 0 & 0 & 1 & 0 & 0 & 0 & 0 & 0 & 0 \\
\hline
0 & 0 & 0 & 0 & 0 & 0 & 0 & 1 & 0 & 0 & 0 \\
0 & 0 & 0 & 0 & 0 & 0 & 0 & 0 & 1 & 0 & 0 \\
0 & 0 & 0 & 0 & 0 & 0 & 0 & 0 & 0 & 1 & 0 \\
0 & 0 & 0 & 0 & 0 & 0 & 0 & 0 & 0 & 0 & 1 \\
\hline
0 & 0 & 1 & 1 & 0 & 0 & 0 & 0 & 0 & 0 & 0 \\
0 & 1 & 1 & 0 & 0 & 0 & 0 & 0 & 0 & 0 & 0 \\
1 & 1 & 0 & 0 & 0 & 0 & 0 & 0 & 0 & 0 & 0 \\
1 & 0 & 0 & 0 & 0 & 0 & 1 & 0 & 0 & 0 & 0 \\
\hline
0 & 0 & 0 & 0 & 1 & 0 & 1 & 1 & 1 & 1 & 1 \\
0 & 0 & 0 & 0 & 1 & 1 & 0 & 0 & 0 & 0 & 0 \\
\end{array}
\right)
\end{array}
$
}
\caption{Matrix for $K_4$ (with $n=4$)}
\label{fig:boundary_K4}
\end{figure}

\section{Conclusion and Discussion.}
In this work, we analyzed how the sparsity of the reduced matrix correlates with the efficiency of the reduction by comparing different algorithms that keep the matrix sparse(r). 
The experiments show that there is no direct relation, as algorithms resulting in less sparse matrices were faster than others that aggressively sparsify. 
Nevertheless, the idea of keeping the matrix sparse
has led us to novel reduction strategies that improve upon state-of-the-art reductions.
Hence, sparsity is an important factor in fast matrix reduction.

The retrospective algorithm often achieves comparable or even better performance than the twist reduction without clearing columns. 
Specifically, it outperforms all other tested methods for shuffled filtration. 
Up to our knowledge, this is the first time that a method without clearing has been proven competitive in practice, which is remarkable as the clearing is the standard optimization that consistently leads to improved performances.
In our experiments over a wide range of datasets, the retrospective method has regularly low fill-in compared to the other methods. 
We believe that the superior performance of the retrospective method is rooted in its sparsity-preserving property.

As indicated in \cref{sec:differentiating}, there is
no strategy that is strictly better than others,
so the best choice of reduction for a specific type
of input has to be determined by comparison.
We will integrate our novel variants into the PHAT library in the next release of PHAT
to facilitate this comparison.

\end{document}